\title{An Alternating Automaton for First-Order Linear Temporal Logic}
\author{Yannick Lebrun \and Raphaël Khoury \and Sylvain Hallé%
}
\institute{%
Laboratoire d'informatique formelle \\
Université du Québec à Chicoutimi, Canada%
}
\newtheorem{dfn}{Definition}
\newtheorem{lem}{Lemma}
\newtheorem{prs}{Proposition}
\newtheorem{thm}{Theorem}
\newtheorem{runtime-property}{Runtime Property}
\newtheorem{rmk}[thm]{Remark}
\newcommand{\tab}{\quad}
\newcommand{\LTLE}{LTL-FO$^+$}
\newcommand{\G}{\mbox{\bf G\,}}
\newcommand{\X}{\mbox{\bf X\,}}
\newcommand{\F}{\mbox{\bf F\,}}
\newcommand{\U}{\mbox{\bf \,U\,}}
\newcommand{\V}{\mbox{\bf \,V\,}}
\newcommand{\Farg}[1]{\F \, #1}
\newcommand{\Garg}[1]{\G \, #1}
\newcommand{\Xarg}[1]{\X \, #1}
\newcommand{\R}{\mathbf{R}}
\newcommand{\Rarg}[2]{#1 \, \R \, #2}
\newcommand{\Uarg}[2]{#1 \, \U \, #2}
\newcommand{\Earg}[3]{\exists_{#1} #2 : #3}
\newcommand{\Aarg}[3]{\forall_{#1} #2 : #3}
\newcommand{\TRUE} {\mathbf{true}}
\newcommand{\FALSE}{\mathbf{false}}
\newcommand{\subformulaset}[1]{sub(#1)}
\newcommand{\accstate}{\top}
\newcommand{\rejstate}{\bot}
\newcommand{\pfuncset}[2]{\mathcal{PF}_{#1 \rightarrow #2}}
\newcommand{\transfunc}[3]{\rho\big((#1, #2), #3\big)}
\newcommand{\transfuncDeux}[2]{\rho\big(#1, #2\big)}
\newcommand{\timedepth}[1]{depth(#1)}
\newcommand{\logicprop}[3]{P[#1, #2](#3)}
\newcommand{\rhorun}[3]{r\big((#1, #2), #3\big)}
\newcommand{\image}[1]{I}
\newcommand{\showboxcond}{$\Box$ }
\begin{document}

\maketitle
\begin{abstract}
In this paper we give automata-based representation of LTL-FO$^+$ properties. LTL-FO$^+$ is an extension of LTL that includes first-order quantification over bounded variable, thus greatly increasing the expressivity of the language.  An automata representation of this formalism allows greater ease in writing and understanding  properties, as well as in performing manipulations, such as negation or emptiness checking. The automata representation of an  LTL-FO$^+$ formula has finite size regardless of the domain of quantified variables, and the number of states that is linear in the size of the property.\end{abstract}

\section{Introduction} 
\LTLE{} \cite{DBLP:journals/tsc/HalleV12} is a  formal language used for the specification of trace properties which distinguishes itself from other representations by its exceptional expressiveness.  It  allows users to state finer relationships between the different elements of several messages in complex event trace. For example, in \cite{DBLP:journals/tsc/HalleV12} \LTLE{} is used to express properties related to  XML messages traces generated by web services. Such properties cannot be stated with a less expressive formalism such as LTL.

However, the on-the-fly verification algorithm is based on the decomposition and rewriting of formul\ae{}. It is highly intensive in its space consumption, with multiple manipulation being performed even when processing messages that have no bearing on the validity of the formula.
Indeed, for some formul\ae{}, the evaluation tree expands indefinitely, and equivalent subtrees, which could be pruned, are hard to identify.  Furthermore, the  elaborate syntax of \LTLE{} can make it difficult to state and read properties (see for example \cite{IsolaCorfu}).

An automat-based representation of \LTLE{} formul\ae{} would thus ease both the writing and reading of formul\ae{}. The space and time overhead of the verification process would also be optimized, since the validity of a formula could be ascertained by maintaining a list of current valuations of and current state or states. Furthermore, an automata representation will allow multiple useful manipulations to be performed with ease, notably counterexamples generation, emptiness checking, intersection and negation of properties.

The automata we propose is a variation of Vardi's alternating automata \cite{Vardi:alternating}, which we have enriched with first-order quantifiers over a finite set of formula variables. This makes it easier to express intricate formul\ae{} over complex events, where each event consists of a XML object with possibly multiple valuations for each path.  Like the alternating automata, the proposed automata distinguishes between \textit{existential} and \textit{universal} transitions. Existential transitions are analogous to non-deterministic transitions in regular Büchi automata. Upon encountering such a transition, the automata can be though of as choosing between multiple destination states. Conversely, when encountering a universal transition, the automata continues its run in both target states simultaneously. Because of the presence of universal and existential quantifiers, a run over an alternating automata generates a tree of states. A run is accepting of there \textit{exists} at least one tree for which \textit{every} branch visits an accepting state infinitely often. While alternating automata are equally expressive as non-deterministic Büchi automata, we show in this paper how the notions of existential and universal transition can be used to model the quantifiers present in \LTLE{} formul\ae{}. The automata is additionally enriched with a partial function mapping formula variables to variables to their values. This function is manipulated by the automata's transition as the input sequence is read and consulted to determine the truth value of elementary propositions.

In this paper, we show how to extend Vardi's alternating automata to accommodate  the greater expressivity of \LTLE{}. Section \ref{sec:ltlfe} provides background information about \LTLE{}. Section \ref{sec:altrep} surveys existing automata representations for other formal logics. In Section \ref{sec:construction}, we show how to construct an modified alternating automata from an \LTLE{} formula, such that the automata accepts exactly the same set of input sequences as the original property. Section \ref{sec:proof} sketches out a proof of correctness. Concluding remarks are given in Section  \ref{sect:conclu}.

\section{The First-Order Temporal Logic \LTLE{}} 
\label{sec:ltlfe}
\LTLE{}, a first-order extension of a well-known logic called Linear Temporal Logic (LTL); LTL has already been suggested for the static verification of web service interface contracts \cite{Nakajima05,DBLP:conf/www/FuBS04,DBLP:journals/re/Robinson06}.

LTL has been introduced to express properties about sequences of states in systems called Kripke structures \cite{clarke-mc}.  In the current case, the states under consideration are XML objects termed messages. Let us denote by $M$ the set of XML messages. A sequence of messages  $m_1, m_2 \dots$, where $m_i \in M$ for every $i \geq 1$, is called a message trace. We write $m_i$ to denote the $i$-th message of the trace $\overline{m}$, and $\overline{m}^i$ to denote the trace obtained from $\overline{m}$ by starting at the $i$-th message.

A domain function is used to fetch and compare values inside a message; it receives an argument $\pi$ representing a {\em path} from the root to some element of the message.  This path is defined using standard, XPath 1.0 notation. Formally, if we let $\image{}$ be a domain of values, and $\Pi$ be the set of XPath expressions, the domain function $Dom$ is an application $M \times \Pi \rightarrow 2^{\image{}}$ which, given a message $m \in M$ and a path $\pi \in \Pi$, returns a subset $Dom_m(\pi)$ of $\image{}$, representing the set of values appearing in message $m$ at the end of the path $\pi$. For example, if we let $\Pi$ be the set of XPath formul\ae{}, $\pi \in \Pi$ be the particular formula ``/message/stock/name'', and $m \in M$ be the following message:

\vskip 10pt
\begin{center}
\begin{tabular}{l}
\small
$<$message$>$\\
\tab $<$action$>$placeBuyOrder$<$/action$>$\\
\tab $<$stock$>$\\
\tab \tab $<$name$>$stock-1$<$/name$>$\\
\tab \tab $<$amount$>$123$<$/amount$>$\\
\tab $<$/stock$>$\\
\tab $<$stock$>$\\
\tab \tab $<$name$>$stock-2$<$/name$>$\\
\tab \tab $<$amount$>$456$<$/amount$>$\\
\tab $<$/stock$>$\\
$<$/message$>$
\end{tabular}
\end{center}
\vskip 10pt

\noindent then $Dom_m(\pi) = \{\mbox{stock-1}, \mbox{stock-2}\}$.

\LTLE{}'s syntax is based on classical propositional logic, using the connectives $\neg$ (``not''), $\vee$ (``or''), $\wedge$ (``and''), $\rightarrow$ (``implies''), to which four temporal operators have been added. An \LTLE{} formula is a well-formed combination of these operators and connectives, according to the usual construction rules:
\begin{dfn}[Syntax]
\begin{enumerate}
\item If $x$ and $y$ are variables or constants, then
  $x=y$ is a \LTLE{} formula;
\item If $\varphi$ and $\psi$ are \LTLE{} formul\ae{}, then $\neg \varphi$,
  $\varphi \wedge \psi$, $\varphi \vee \psi$, $\varphi \rightarrow \psi$,
  $\G \varphi$,
  $\F \varphi$, $\X \varphi$,
  $\varphi \U \psi$, $\varphi \V \psi$
  are \LTLE{} formul\ae{};
\item If $\varphi$ is a \LTLE{} formula, $x_i$ is a free variable in $\varphi$,
  $p \in \Pi$ is a XPath  value,
  then $\exists_p x_i: \varphi$ and $\forall_p x_i: \varphi$ are \LTLE{} formul\ae{}.
\end{enumerate}
\end{dfn}

The semantics of an \LTLE{} formula is given with respect to a partial function $p : V \rightarrow \image{}$ that assigns every free variable in the formula. Let's denote by $\logicprop{p}{\psi}{\overline{m}}$ the predicate that outputs $\TRUE$ if the message trace $\overline{m}$ satisfies $\psi$ given $p$ and $\FALSE$ otherwise. The semantics of \LTLE{} is then given as:
\begin{table}
\begin{eqnarray*}
\logicprop{p}{c_1 = c_2}{\overline{m}}
&\Leftrightarrow&
p(c_1) = p(c_2)\\
\logicprop{p}{c_1 \neq c_2}{\overline{m}}
&\Leftrightarrow&
p(c_1) \neq p(c_2)\\
\logicprop{p}{\mu \lor \eta}{\overline{m}}
&\Leftrightarrow&
\logicprop{p}{\mu}{\overline{m}}
\lor
\logicprop{p}{\eta}{\overline{m}};\\
 \logicprop{p}{\mu \land \eta}{\overline{m}}
&\Leftrightarrow&
\logicprop{p}{\mu}{\overline{m}}
\land
\logicprop{p}{\eta}{\overline{m}};\\
 \logicprop{p}{\Xarg{\psi}}{\overline{m}}
&\Leftrightarrow&
\logicprop{p}{\psi}{\overline{m}^2};\\
\logicprop{p}{\Uarg{\mu}{\eta}}{\overline{m}}
&\Leftrightarrow&
\logicprop{p}{\eta}{\overline{m}} \lor
\Big(
\logicprop{p}{\mu}{\overline{m}} \land
\logicprop{p}{\Uarg{\mu}{\eta}}{\overline{m}^2}
\Big);\\
\logicprop{p}{\Rarg{\mu}{\eta}}{\overline{m}}
&\Leftrightarrow&
\logicprop{p}{\mu \land \eta}{\overline{m}} \lor
\Big(
\logicprop{p}{\eta}{\overline{m}} \land
\logicprop{p}{\Rarg{\mu}{\eta}}{\overline{m}^2}
\Big);\\
\logicprop{p}{\Earg{\pi}{x}{\psi}}{\overline{m}}
&\Leftrightarrow&
\bigvee_{x_i \in Dom_m(\pi)}
\logicprop{p \cup \lbrace(x, x_i)\rbrace}{\psi}{\overline{m}};\\
\logicprop{p}{\Aarg{\pi}{x}{\psi}}{\overline{m}}
&\Leftrightarrow&
\bigwedge_{x_i \in Dom_m(\pi)}
\logicprop{p \cup \lbrace(x, x_i)\rbrace}{\psi}{\overline{m}}.
\end{eqnarray*}
\caption{Semantics of \LTLE{}}
\label{tab:semantics}
\end{table}

When $p$ is clear from context we write $\overline{m} \models \varphi$ to indicate that the trace $\overline{m}$ satisfies $\varphi$. As usual, we define the semantics of the other connectors with the following identities:
$\varphi \wedge \psi \equiv \neg(\neg \varphi \vee \neg \psi)$, 
$\varphi \rightarrow \psi \equiv \neg \varphi \vee \psi$, 
$\G \varphi \equiv \neg (\F \neg \varphi)$, 
$\varphi \V \psi \equiv \neg (\neg \varphi \U \neg \psi)$, 
$\forall_p x : \varphi \equiv \neg (\exists_p x : \neg \varphi)$.



Boolean connectives carry their usual meaning. The temporal operator {\bf G} means ``globally'': the formula $\mbox{\bf G}\,\varphi$ means  $\varphi$ holds for every message of the trace. The operator {\bf F} means ``eventually''; the formula $\mbox{\bf F}\,\varphi$ holds if $\varphi$ holds for some future message of the trace. The operator {\bf X} means ``next'' and \X{} $\varphi$ holds whenever $\varphi$ holds in the next message of the trace. Finally, the {\bf U} operator means ``until'' and formula $\varphi\,\mbox{\bf U}\,\psi$ holds if $\varphi$ holds for every messages until some message satisfies $\psi$.

Any \LTLE{} formula also has an equivalent \emph{negation normal form}. An \LTLE{} formula is in \emph{negation normal form} if it doesn't contain the operators $\F$, $\G$ and $\rightarrow$, and if all negations $\neg$ are pushed inside until they precede equalities.
 ~~~~~~~~~ Definition : Negation normal form ~~~~~~~~ %
\begin{dfn}
An \LTLE{} formula is in \emph{negation normal form} if it doesn't contain the operators $\F$, $\G$ and $\rightarrow$, and if all negations $\neg$ are pushed inside until they precede equalities.
\end{dfn}
 ~~~~~~~~~~~~~~~~~~~~~~~~~~~~~~~~~~~~~~~~~~~~~~~~~~~~ %

We identify $\neg(x = y)$ with $x \neq y$ in order to eliminate the operator $\neg$ completely. This form is always obtainable with the use the identities $\Farg{\psi} \equiv \Uarg{\TRUE}{\psi}$, $\Garg{\psi} \equiv \Rarg{\FALSE}{\psi}$ and $\psi_1 \rightarrow \psi_2 \equiv \neg\psi_1 \lor \psi_2$, and by remembering that $\lor$, $\U$, and $\exists$ are the dual of $\land$, $\R$, and $\forall$ respectively. It what follows, we consider only properties in negation normal form.

The notion of \emph{temporal depth} of formul\ae{} serves as the basis for induction in a number of proofs.
\begin{dfn}
The \emph{temporal depth} of an \LTLE{} formula $\varphi$, denoted $\timedepth{\varphi}$, is the maximal number of nodes associated with a temporal operator ($\F$, $\G$, $\X$, $\U$ or $\R$) that can be observed in a branch of $\varphi$.
\end{dfn}

Variable assignments can be represented mathematically as a partial function from a set of variables to a set of values. We denote by $V$ the set of variable that occur in a \LTLE{} formula and by $\image{}$ the domain of values that may appear in a message. We write $\pfuncset{V}{\image{}}$, for the set of partial functions from $V$ to $\image{}$. The state of all variables in $V$ during a run of the automaton $A_{\varphi}$ can always be represented by a partial function $p : V \rightarrow \image{}$. Some variables may not be assigned yet, but $p$ is updated continuously as $A_{\varphi}$ reads the input trace, with $p(x)$ representing the valuation of variable  $x\in V$. Abusing the notation we write $p(v)=v$ for any constant value $v\in \image{}$ that occurs in a formula.

\subsection{Monitoring \LTLE{}}

By repeatedly applying the classical semantic rules of LTL, the evaluation of an LTL formula $\varphi$ on a trace $\sigma$ induces a tree. For example, in the case of the formula $\G (a \rightarrow \X b)$ evaluated on the trace cab, the top-level operator of that formula, \textbf{G}, corresponds to the top-level node of the tree. According to the semantics of LTL, $\G \varphi$ is true if and only if $\varphi$ is true for every suffix of the current trace. The tree hence spawns three child nodes, corresponding to the evaluation of $a \rightarrow X b$ for traces \textsl{cab}, \textsl{ab} and \textsl{b}, respectively. Taking the first such child node, the the top-level operator now becomes $\rightarrow$; this operator evaluates to when, on the current trace, either a evaluates to $\bot$ or $\X b$ evaluates to $\top$. This, in turn, spawns two child nodes corresponding to each condition, and so on. Ultimately, only equalities on values remain, and the trough value of each subformula can then be obtained by combining and propagating values towards the top of the tree.



\section{Related Representations} \label{sec:altrep} 
Büchi Automata \cite{buchi}, extend finite non-deterministic automata to infinite traces and  provide an automata representation of LTL formul\ae{}. Multiple translation algorithms exist \cite{Vardi:alternating}, and the automaton's  state set is linear in the size of the property under consideration. Unlike automata that recognize finite sequences, the nondeterminism of the automata is essential to its expressive power. An infinite sequence is valid if it enters an accepting state infinitely often on at least one of its possible runs.

Vardi \cite{Vardi:alternating},  further suggested extending the Büchi Automata to include both existential and universal transitions. When encountering an existential choice for a given input token, the automata non-deterministically chooses one of them, and can then be though of as being in either one of its multiple possible destination states (thus behaving in the same manner as a non-deterministic Büchi automata).  When encountering a universal transition, the automata's run continues simultaneously in each one of the universal transition' s destination state. A run of an alternating automata generates a tree, and an infinite input sequence is accepting of the \textit{exists} at least one tree for which \textit{every} branch visits an accepting state infinitely often. While alternating automata are no more expressive than regular Büchi automata, they can be exponentially more concise.

A more expressive representation was proposed by Barringer et al. \cite{qea}. The devised the Quantified Event Automata, an automata enriched with quantified variables.  Each variable is associated with a domain of values that have been observed so far, used to determine acceptance. Compared with \LTLE{}, QEA are strictly less expressive because QEA restrict the position on quantifiers in the formula.

Cassar et al. \cite{larva} introduce  dynamic automata with timers and events (DATEs) which serves as basis for verification in the LARVA system. Using Dates, specific events can engender a duplication of the property automata under consideration. For example, in a scenario in which multiple users interact with a service using a specific protocol, modeled by an automata. A new automaton instance would be  generated each time a user logs onto the system and initiates the communication protocol. The expressivity of \LTLE{} is orthogonal to that of DATEs. The quantifiers of present in \LTLE{} formul\ae{} allows it to process traces in which a single event contains multiple instances with the same name, which is not possible with DATEs. However, DATEs posses clocks and internal variables that allow them to  verify some behaviors that cannot be stated with \LTLE{}.

\section{Automata-Theoretic Representation of \LTLE{}}\label{sec:construction}

In this section, we show how, given an  \LTLE{} formula $\varphi$, one can build a \emph{modified alternating B\"uchi automaton} $A_\varphi = (\Sigma, V, \image{}, S, s^0, \rho, F)$ such that the language recognized by $A_\varphi$ is exactly the set of message traces satisfying $\varphi$.
The alphabet $\Sigma$ is the set $M$ of all XML messages. The set $V$ consists of all the variables that compose $\varphi$. The set $C$ includes all the constants present $\varphi$ as well as any value that can be assigned to the variables in $V$ when the automaton reads a message $m \in \Sigma$. The set $S$ of states consists of all subformulas of $\varphi$, denoted by $\subformulaset{\varphi}$, defined recursively as follows:\\
\begin{tabular}{lcccc}
 $\varphi$ &$\in$& $\subformulaset{\varphi}$; && \\
 $\mu \lor \eta$ &$\in$&  $\subformulaset{\varphi}$
&$\Rightarrow$&
$\mu, \eta \in \subformulaset{\varphi}$;\\
 $\mu \land \eta$ &$\in$&  $\subformulaset{\varphi}$
&$\Rightarrow$&
$\mu, \eta \in \subformulaset{\varphi}$;\\
$\Xarg{\psi}$ &$\in$& $\subformulaset{\varphi}$
&$\Rightarrow$&
$\psi \in \subformulaset{\varphi}$;\\
$\Uarg{\mu}{\eta} $ &$\in$& $ \subformulaset{\varphi}$
&$\Rightarrow$&
$\mu, \eta \in \subformulaset{\varphi}$;\\
$\Rarg{\mu}{\eta} $ &$\in$& $  \subformulaset{\varphi}$
&$\Rightarrow$&
$\mu, \eta \in \subformulaset{\varphi}$;\\
$\Earg{\pi}{x}{\psi}  $ &$\in$& $ \subformulaset{\varphi}$
&$\Rightarrow$&
$\psi \in \subformulaset{\varphi}$;\\
$\Aarg{\pi}{x}{\psi} $ &$\in$& $\subformulaset{\varphi}$
&$\Rightarrow$&
$\psi \in \subformulaset{\varphi}$.
\end{tabular}

The set $S$ additionally contains two distinct states, an accepting state $\accstate$ and its negation $\rejstate$. Both of these are ``pit'' states with any outgoing transition looping back to themselves.

A run of $A_\varphi$ is characterized by the states of $S$ that are visited as well as by the variable assignments that hold during these visits. Hence, the current ``real'' states of $A_\varphi$ can be thought as a pair $(p, \psi)$ from $\pfuncset{V}{\image{}} \times~S$ where $p$ represents the variable assignments that currently holds at $\psi$. The transition function $\rho : \pfuncset{V}{\image{}} \times~S \times $ also operates  on $\pfuncset{V}{\image{}} \times~S$ rather than on $S$. 

The initial state $s^0 \in S$ is $\varphi$ itself and the initial input of  $\rho$  is the couple $(\varnothing, s^0)$ where $\varnothing$ denotes the empty partial function from $V$ to $\image{}$. The set $F\subset S$ of accepting states includes $\accstate$ and every Release formula in $S$. The variable assignments do not affect the acceptance of the run: any couple $(p, \psi)$ is accepting if $\psi \in F$.  We  write $F_\psi$ to refer to the set of accepting states built from any formula $\psi$. More formally, if $x$ and $y$ are variables or constants, and if $\psi$, $\mu$, and $\eta$ are \LTLE{} formul\ae{}, the set of accepting states is recursively defined according to the following rules:
\begin{itemize}
\item $F_{(x = y)} = F_{(x \neq y)} = \lbrace\accstate\rbrace$;
\item $F_{(\Xarg{\psi})} = F_{(\Earg{\pi}{x}{\psi})} = F_{(\Aarg{\pi}{x}{\psi})} = F_\psi$;
\item $F_{(\mu \lor \eta)} = F_{(\mu \land \eta)} = F_{(\Uarg{\mu}{\eta})} = F_\mu \cup F_\eta$;
\item $F_{(\Rarg{\mu}{\eta})} = F_\mu \cup F_\eta \cup \lbrace\Rarg{\mu}{\eta}\rbrace$.
\end{itemize}

It remains only to define the transition function $\rho$. This is efficiently done by listing a small amount of general rules that must be applied recursively. Let $p$ be a partial function from $V$ to $\image{}$, and $m$ be any message in $\Sigma$. We set:
\begin{itemize}
\item $\transfunc{p}{\accstate}{m} =
(\varnothing, \accstate)$;
\item $\transfunc{p}{\rejstate}{m} =
(\varnothing, \rejstate)$;
\item $\transfunc{p}{x = y}{m} =
\begin{cases}
(\varnothing, \accstate) &
\text{if } p (x) = p (y) \\
(\varnothing, \rejstate) &
\text{otherwise;}
\end{cases}$
\item $\transfunc{p}{x \neq y}{m} =
\begin{cases}
(\varnothing, \accstate) &
\text{if } p (x) \neq p (y)
\\
(\varnothing, \rejstate) &
\text{otherwise;}
\end{cases}$
\item $\transfunc{p}{\mu \lor \eta}{m} =
\transfunc{p}{\mu}{m}
\lor
\transfunc{p}{\eta}{m}$;
\item $\transfunc{p}{\mu \land \eta}{m} =
\transfunc{p}{\mu}{m}
\land
\transfunc{p}{\eta}{m}$;
\item $\transfunc{p}{\Xarg{\psi}}{m} =
(p, \psi)$;
\item $\smash{
\transfunc{p}{\Uarg{\mu}{\eta}}{m} =
\transfunc{p}{\eta}{m}
\lor
\Big(
	\transfunc{p}{\mu}{m}
	\land
	(p, \Uarg{\mu}{\eta})
\Big)
}$;
\item $\smash{
\transfunc{p}{\Rarg{\mu}{\eta}}{m} =
\transfunc{p}{\mu \land \eta}{m}
\lor
\Big(
	\transfunc{p}{\eta}{m}
	\land
	(p, \Rarg{\mu}{\eta})
\Big)
}$;
\item $\smash{
\transfunc{p}{\Earg{\pi}{x}{\psi}}{m} =
\bigvee_{x_i \in Dom_m(\pi)}
\transfunc{p \cup \lbrace(x, x_i)\rbrace}{\psi}{m}
\lor
(\varnothing, \rejstate)
}$;
\item $\smash{
\transfunc{p}{\Aarg{\pi}{x}{\psi}}{m} =
\bigwedge_{x_i \in Dom_m(\pi)}
\transfunc{p \cup \lbrace(x, x_i)\rbrace}{\psi}{m}
\land
(\varnothing, \accstate)
}$.
\end{itemize}
If $Dom_m(\pi)$ is empty, we take the disjunction $\smash{\lor_{x_i \in Dom_m(\pi)}}$ in the case $\exists$ to be equivalent to $\FALSE$, and the conjunction $\smash{\land_{x_i \in Dom_m(\pi)}}$ in the case $\forall$ to be equivalent to $\TRUE$. Hence, the last two cases of the  transition formula evaluate to $(\varnothing, \rejstate)$ and $(\varnothing, \accstate)$ respectively. These special rules are logically consistent with the inherent meaning of $\exists$ and $\forall$. 

The constants $\TRUE$ and $\FALSE$, which we include in the set $\image{}$, may appear in the formul\ae{} of $S$ due to identities such as $\Farg{\psi} \equiv \Uarg{\TRUE}{\psi}$. In order for them to be compatible with the syntax of \LTLE{}, and our definition of $\rho$, we identify $\TRUE$ with the equality $\TRUE = \TRUE$, and $\FALSE$ with the inequality $\FALSE \neq \FALSE$.

Note that $\rho$ is undefined if its input contains an equality or inequality for which one of the variables are undefined in $p$. However, as long as $\varphi$, and by implication all of its subformulas are well-formed, this will never occur. Indeed, any variable in a well-formed \LTLE{} formula must be preceded by a quantifier on said variable. Therefore, $\rho$ will first process the quantifier and assign values to this variable before reaching the equality or inequality.

\begin{dfn} \label{dfnRunRho}
Let $\psi$ be an \LTLE{} formula, and $p : V \rightarrow \image{}$ be a partial function that assigns a value to every free variable in $\psi$. A \emph{run of $\rho$ on a message trace $\overline{m} = m_0, m_1, m_2, \ldots$  with root $(p, \psi)$} is an infinite $(\pfuncset{V}{\image{}} \times S)$-labelled tree that respects two conditions:
\begin{enumerate}
\item The root node must is labelled $(p, \psi)$;
\item Let's denote the distance between a node $N$ and the root by $d(N)$, its label by $l(N)$, and the labels of its $c_N$ children by $L_N = \lbrace l_1, \ldots, l_{c_N} \rbrace$. For any node $N$, the set $L_N$ must satisfy $\smash{\transfuncDeux{l(N)}{m_{d(N)}}}$, and for any $1 \leq i \leq c_N$, the label (or couple) $l_i$ must appear in $\smash{\transfuncDeux{l(N)}{m_{d(N)}}}$.
\end{enumerate}

For any \LTLE{} formula $\psi$, a run of $\rho$ is said to be \emph{$F_\psi$-accepting} if and only if every branch of the run has an infinite number of nodes whose labels contain a state in $F_\psi$; in other words, if every branch \emph{visits $F_\psi$ infinitely often}.
\end{dfn}

A  ``run of $\rho$''  is a tree of couples from $\pfuncset{V}{D} \times S$ generated by following the rules of $\rho$ for all messages in a trace. These  runs are more general than the runs of a modified alternating B\"uchi automaton $A_\varphi$ whose root is limited to the label $(\varnothing, \varphi)$. The previous definition allows us to properly name, and work with, parts of automaton runs, which happens frequently in Section \ref{sec:proof}. The parts that are themselves runs of $\rho$ will be called \emph{subruns}.

\begin{dfn}
Let $\psi$ be any \LTLE{} formula. A \emph{run of the automaton $A_\varphi$ on a message trace $\overline{m}$} is an run of $\rho$ on $\overline{m}$ and with root $(\varnothing, \varphi)$. Such a run is \emph{accepting} if and only if it is $F_\varphi$-accepting.  \emph{$A_\varphi$ accepts $\overline{m}$} if $A_\varphi$ admits at least one accepting run on $\overline{m}$.
\end{dfn}


\section{Proof of Correctness} \label{sec:proof}
\subsection{Preliminaries}
The following lemma shows that any transition formula in $\rho$ can be expressed as a disjunction of conjunctive clauses which themselves consist only in  equalities, inequalities,  couples in $\pfuncset{V}{\image{}} \times S$ or couples of the form $(\varnothing, \accstate)$ or $(\varnothing, \rejstate)$. This form can be obtained once every recursive rule of $\rho$, except the ones on equalities and inequalities, has been applied and makes it easier to identify the set of states that may compose the next level of a run tree. In what follows, for a given disjunction $D$,  the sets $E_d$ ranges over the equalities and inequalities, $N_d$ ranges over couples in $\pfuncset{V}{\image{}} \times S$ and $A_d$ ranges over couples of the form $(\varnothing, \accstate)$ or $(\varnothing, \rejstate)$. Intuitively, the disjunctive form represents the multiples combinations of  states that are simultaneously visited by the automata during a run.


\begin{lem} \label{lemRhoDecomp}
For any message $m \in M$, any \LTLE{} formula $\psi$, and any partial function $p : V \rightarrow \image{}$ that assigns every free variable in $\psi$, $\transfunc{p}{\psi}{m}$ is equal to an expression of the form
\begin{equation}
\bigvee_{d \in D}
\Big(
\bigwedge_{e \in E_d} \transfunc{p_{d, e}}{\varepsilon_{d, e}}{m}
\land
\bigwedge_{n \in N_d} (p_{d, n}, \psi_{d, n})
\land
\bigwedge_{a \in A_d} (\varnothing, \accstate_{d, a})
\Big)
\text{.}
\end{equation}
For all $d \in D$, $e \in E_d$, $n \in N_d$, and $a \in A_d$, $\varepsilon_{d, e}$ is an equality or inequality, $\psi_{d, n}$ is a subformula of $\psi$ (which includes $\psi$), $\accstate_{d, a}$ is either $\accstate$ or $\rejstate$, and $p_{d, e}$ and $p_{d, n}$ are partial functions from $V$ to $\image{}$ that assign every free variable in $\varepsilon_{d, e}$ and $\psi_{d, n}$ respectively.

Moreover, if $\timedepth{\psi} = 0$, then $N_d$ is empty for all $d \in D$.Otherwise, $N_d$ may not always be empty, and any $\psi_{d, n}$ fits only one of three descriptions:
\begin{itemize}
\item[1)] $\timedepth{\psi_{d, n}} < \timedepth{\psi}$;
\item[2)] $\timedepth{\psi_{d, n}} = \timedepth{\psi}$ and $\psi_{d, n} = \Uarg{\mu}{\eta}$ for some $\mu$, $\eta \in \subformulaset{\psi}$;
\item[3)] $\timedepth{\psi_{d, n}} = \timedepth{\psi}$ and $\psi_{d, n} = \Rarg{\mu}{\eta}$ for some $\mu$, $\eta \in \subformulaset{\psi}$.
\end{itemize}
\end{lem}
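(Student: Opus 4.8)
The plan is to prove the statement by structural induction on $\psi$, mirroring the recursive definition of $\rho$. The base cases are the equalities and inequalities and the two pit states $\accstate$, $\rejstate$: each produces a single disjunct (so $|D| = 1$) in which exactly one of $E_d$ or $A_d$ is a singleton and the remaining index sets are empty. The required form then holds trivially, $\timedepth{\psi} = 0$, and $N_d$ is empty, as claimed.

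For the inductive step I would treat each top-level connective in turn. For $\mu \lor \eta$ the two sub-expressions $\transfunc{p}{\mu}{m}$ and $\transfunc{p}{\eta}{m}$ are, by the induction hypothesis, already disjunctions of the required shape, and their disjunction is obtained simply by taking the union of the two families of disjuncts. For $\mu \land \eta$ I would distribute the conjunction over the two disjunctions, using $(\bigvee_d C_d) \land (\bigvee_{d'} C'_{d'}) = \bigvee_{d, d'} (C_d \land C'_{d'})$, and observe that each combined clause merely merges the index sets $E$, $N$, $A$ of its two factors and so still consists only of the allowed atoms. The quantifier cases $\exists$ and $\forall$ are handled identically, distributing the big disjunction or conjunction over $Dom_m(\pi)$ and absorbing the extra disjunct $(\varnothing,\rejstate)$ or conjunct $(\varnothing,\accstate)$; here the recursive calls use the extended assignment $p \cup \lbrace(x,x_i)\rbrace$, which still assigns every free variable of the body, and the empty-domain conventions collapse the expression to a single $A_d$-clause. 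The operators $\X$, $\U$, $\R$ are the only ones that place couples into $N_d$: $\Xarg{\chi}$ contributes the single couple $(p,\chi)$, while $\U$ and $\R$ contribute the self-referential couples $(p, \Uarg{\mu}{\eta})$ and $(p, \Rarg{\mu}{\eta})$ alongside the clauses obtained recursively from $\mu$ and $\eta$.

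The heart of the argument, and the step I expect to be the main obstacle, is the temporal-depth bookkeeping. Here I would use that $\lor$, $\land$, $\exists$, $\forall$ do not increase temporal depth, so $\timedepth{\psi}$ is the maximum of the depths of its operands, whereas $\X$, $\U$, $\R$ each add one, so any operand of such a formula has depth strictly below $\psi$. Given a couple $\psi_{d,n}$ arising from an operand $\chi$, the induction hypothesis places it in one of the three descriptions relative to $\chi$, and I would argue case by case that this lifts to the three descriptions relative to $\psi$: if $\timedepth{\chi} < \timedepth{\psi}$ the couple always falls into description 1), while if $\timedepth{\chi} = \timedepth{\psi}$ the three descriptions transfer verbatim. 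The only couples that can attain depth exactly $\timedepth{\psi}$ therefore come from operands $\chi$ with $\timedepth{\chi} = \timedepth{\psi}$, and by the depth-increase property such an operand can only sit beneath the non-temporal connectives; pushing this observation down to the leaves, the sole full-depth couples are the explicit self-loops $(p,\Uarg{\mu}{\eta})$ and $(p,\Rarg{\mu}{\eta})$ emitted by the $\U$ and $\R$ rules, which are precisely descriptions 2) and 3), with every remaining couple of strictly smaller depth.

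Finally, the claim that $N_d$ is empty whenever $\timedepth{\psi} = 0$ falls out of the same induction as a side condition: couples are introduced only by the $\X$, $\U$, and $\R$ rules, each of which yields a formula of temporal depth at least $1$, so at depth $0$ these cases never apply and the non-temporal connectives merely combine $N$-families that are empty by the induction hypothesis.
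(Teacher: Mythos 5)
Your proposal is correct, and it reaches the same disjunctive normal form by the same case analysis (union of disjunct families for $\lor$, distribution for $\land$ and $\forall$, extended assignments for the quantifiers, self-loop couples for $\U$ and $\R$), but the induction scheme is genuinely different from the paper's. The paper does not induct on the structure of $\psi$: it inducts on the rewriting process itself, introducing an intermediate ``partially decomposed'' form with an extra index family $T_d$ of not-yet-expanded terms $\transfunc{p_{d,t}}{\psi_{d,t}}{m}$, showing that each single application of a rule of $\rho$ preserves that augmented form, and then separately arguing termination (the depth of pending terms decreases, so eventually every $T_d$ is empty and the augmented form collapses to the claimed one). Your structural induction makes the intermediate form and the termination argument unnecessary, since the inductive hypothesis hands you the fully decomposed subexpressions and well-foundedness is automatic; the one point you should make explicit, which the paper flags, is that in the $\R$ rule the term $\transfunc{p}{\mu\land\eta}{m}$ involves $\mu\land\eta$, which is generally not a subformula of $\psi$, so you must first unfold it via the $\land$ rule into $\transfunc{p}{\mu}{m}\land\transfunc{p}{\eta}{m}$ before invoking the hypothesis on $\mu$ and $\eta$. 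Your depth bookkeeping (only the self-referential couples emitted by $\U$ and $\R$ can attain full depth, everything passing through $\X$, $\U$, $\R$ operands drops strictly below, and the three descriptions lift from operands to $\psi$) matches the paper's conclusion and is argued more explicitly than in the paper, which essentially asserts it.
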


Observe that, since the rules governing the decomposition of equalities and inequalities are not appliqued when decomposing a formula in Normal form, if $\timedepth{\psi} = 0$, then $N_d$ and $A_d$ is empty for all $d \in D$. Indeed, elements are only be added in a set $A_d$ in the particular case when an a qualifier $\forall$ (resp. $\exists$) is encountered with an empty domain, leading to a trivial true (resp. false) verdict.

\begin{rmk}
For the sake of conciseness, we will regularly shorten the decomposition described in Lemma \ref{lemRhoDecomp} to:
\begin{equation*}
\transfunc{p}{\psi}{m} =
\bigvee_{d \in D}
\Big( \bigwedge_{c \in C_d} R_{d, c} \Big)
\text{.}
\end{equation*}

The set $C_d$, for all $d \in D$, includes every index in $E_d$, $N_d$, and $A_d$. The terms $R_{d, c}$,  are of the form $\transfunc{p_{d, e}}{\varepsilon_{d, e}}{m}$, $(p_{d, n}, \psi_{d, n})$, or $(\varnothing, \accstate_{d, a})$.
\end{rmk}

\begin{proof}
[Proof of Lemma \ref{lemRhoDecomp}]
Each term $R_{d, c}$ in the decomposition of $\transfunc{p}{\psi}{m}$ is obtained by recursively applying the rules of $\rho$. Every rule, if applicable, must be used exhaustively, except  the rules on equalities and inequalities, which are not applied at this point.  This restriction guarantees that  terms of the form $\transfunc{p_{d, e}}{\varepsilon_{d, e}}{m}$ are preserved. Thus, terms of the form $(\varnothing, \accstate)$ or $(\varnothing, \rejstate)$ only occur after evaluating the the quantifiers $\exists_{\pi}$ and $\forall_{\pi}$ when $Dom_m(\pi) = \varnothing$.

It follows from the restriction above, and the rules of $\rho$, that $\transfunc{p}{\psi}{m}$ can be recursively decomposed into a set consisting of couples $(f, \phi) \in \pfuncset{V}{I} \times S$ as well as of terms of the form $\transfunc{p_{d, e}}{\varepsilon_{d, e}}{m}$. Any output of $\rho$ that does not match one of these forms may be further decomposed. However, the recursive decomposition process cannot go on indefinitely, because while the rules of $\rho$ sometimes increase the number of terms, the output states always have a smaller depth. 

Moreover, a couple of the form  $(f, \phi)$ only appears if $\psi$ contains temporal operators or quantifiers. If $(f, \phi)$ is output by a rule related to the operator $\X$, $\U$ or $\R$, then the state $\phi$ is a subformula of $\psi$. Otherwise, if a couple $(f, \phi)$  is generated to a quantifier $\exists_{\pi}$ or $\forall_{\pi}$,  $f = \varnothing$ and $\phi \in \{ \accstate, \rejstate \}$. Hence, $(f, \phi)$ is either of the form $(p_{d, n}, \psi_{d, n})$ ,  $(\varnothing, \accstate_{d, a})$ or $(\varnothing, \rejstate_{d, a})$ and can thus be indexed by a set $E_d$, $N_d$, or $A_d$. It remains to show that these terms can be arranged into a disjunction of conjunctions.

We begin by showing that $\transfunc{p}{\psi}{m}$, and all of its partial decompositions, match a form that is similar, but not identical, to the desired one. It is still a disjunction of conjunctions, but its terms are partitioned by sets $T_d$ and $C_d = E_d \cup A_d \cup N_d$ instead of just $C_d$. More precisely, we consider the form
\begin{equation} \label{eqnRhoTempDecomp}
\bigvee_{d \in D}
\Big(
	\bigwedge_{t \in T_d} \transfunc{p_{d, t}}{\psi_{d, t}}{m}
	\land
	\bigwedge_{c \in C_d} R_{d, c}
\Big)
\text{.}
\end{equation}
For all $d \in D$, and $t \in T_d$, $\psi_{d, t}$ is a subformula of $\psi$ that is neither an equality nor inequality, and $p_{d, t}$ is a partial function from $V$ to $I$ that assigns every free variable in $\psi_{d, t}$. In short, the sets $T_d$ index terms that may be further decomposed. The terms $R_{d, c}$, which are either of the form $\transfunc{p_{d, e}}{\varepsilon_{d, e}}{m}$, $(p_{d, n}, \psi_{d, n})$,  $(\varnothing, \accstate_{d, a})$ or $(\varnothing, \rejstate_{d, a})$ , already satisfy the conditions of the lemma. In particular, the partial function of any term $R_{d, c}$ covers every free variable in its associated formula.

We prove the validity of this claim by induction. First, let's consider the initial term, $\transfunc{p}{\psi}{m}$, for which $p$ assigns every free variable in $\psi$. If $\psi$ is an equality or inequality, then this term is of the form $\transfunc{p_{d, e}}{\psi_{d, e}}{m}$. Otherwise, it matches the form $\transfunc{p_{d, t}}{\psi_{d, t}}{m}$. In both cases, $\transfunc{p}{\psi}{m}$ is the sole operand of a conjunction indexed by a set $T_d$ or $C_d$ where $D = \{ d \}$. Therefore, the base case holds.

For the induction step, we show that the partial decomposition of $\transfunc{p}{\psi}{m}$, possibly $\transfunc{p}{\psi}{m}$ itself also is of the form \eqref{eqnRhoTempDecomp}. If $T_d = \varnothing$ for every $d \in D$, then the decomposition only contains terms of the form $R_{d, c}$. Hence, it matches the expression claimed by this lemma, as well as the form \eqref{eqnRhoTempDecomp}. Otherwise, if $T_{d^*} \neq \varnothing$ for some $d^* \in D$, then it contains a term of the form $\transfunc{p_{d^*, t^*}}{\psi_{d^*, t^*}}{m}$ that is indexed by $d^*$ and some $t^* \in T_{d^*}$. Since $\transfunc{p_{d^*, t^*}}{\psi_{d^*, t^*}}{m}$ is further decomposable, we apply the corresponding rule of $\rho$ to it. We prove below that the resulting terms can always be arranged in a way that preserves the disjunctive form \eqref{eqnRhoTempDecomp}. Therefore, the induction step also holds.

The details of maintaining the disjunctive form depend on the nature of the formula $\psi_{d,^*, t^*}$. If it is a Next formula $\Xarg{\mu}$, then $\transfunc{p_{d^*, t^*}}{\psi_{d^*, t^*}}{m}$ outputs a couple $(p_{d^*, t^*}, \mu)$ that is simply added to the conjunction indexed by $N_{d^*} \subseteq C_{d^*}$. If $\psi_{d^*, t^*}$ is of the form $\Earg{\pi}{x}{\mu}$ or $\Aarg{\pi}{x}{\mu}$, but $Dom_m(\pi) = \varnothing$, then $\transfunc{p_{d^*, t^*}}{\psi_{d^*, t^*}}{m}$ outputs $(\varnothing, \rejstate)$ or $(\varnothing, \accstate)$ respectively. This output is then added to the conjunction indexed by $A_{d^*} \subseteq C_{d^*}$. The remaining cases, which concern the forms $\mu \lor \eta$, $\mu \land \eta$, $\Uarg{\mu}{\eta}$, and $\Rarg{\mu}{\eta}$, but also $\Earg{\pi}{x}{\mu}$ and $\Aarg{\pi}{x}{\mu}$ when $Dom_m(\pi) \neq \varnothing$, are covered below.

\begin{flushleft}
\textit{Case} $\psi_{d^*, t^*} = \Aarg{\pi}{x}{\mu}$ or $\psi_{d^*, t^*} = \mu \land \eta$ ($\mu, \eta \in \subformulaset{\psi}$, $Dom_m(\pi) \neq \varnothing$)
\end{flushleft}

Let's first suppose that $\psi_{d^*, t^*} = \Aarg{\pi}{x}{\mu}$. By applying the corresponding rule of $\rho$ on $\transfunc{p_{d^*, t^*}}{\psi_{d^*, t^*}}{m}$, and by considering the terms surrounding its output, we obtain the following equation:

\small
\begin{IEEEeqnarray*}{Rcl}
\IEEEeqnarraymulticol{3}{l}{
	\bigg(
		\transfunc{p_{d^*, t^*}}{\Aarg{\pi}{x}{\mu}}{m}
		\land \!\!
		\bigwedge_{\substack{t \in T_{d^*} \\ t \neq t^*}}
		\transfunc{p_{d^*, t}}{\psi_{d^*, t}}{m}
		\land \!\!
		\bigwedge_{c \in C_{d^*}}
		R_{d^*, c}
	\bigg)
	\lor
	\bigvee_{\substack{d \in D \\ d \neq d^*}}\ldots
} 
\\ = & \\
\IEEEeqnarraymulticol{2}{l}{
	\bigg(
		\bigwedge_{x_i \in Dom_m(\pi)}
		\transfunc{p_{d^*, t^*} \cup \{ (x, x_i) \}}{\mu}{m}
		\land \!\!
		\bigwedge_{\substack{t \in T_{d^*} \\ t \neq t^*}}
		\transfunc{p_{d^*, t}}{\psi_{d^*, t}}{m}
		\land \!\!
		\bigwedge_{c \in C_{d^*}}
		R_{d^*, c}
	\bigg)
	\lor
	\bigvee_{\substack{d \in D \\ d \neq d^*}}\ldots
}
\end{IEEEeqnarray*}
\normalsize
The term $\transfunc{p_{d^*, t^*}}{\psi_{d^*, t^*}}{m}$, which belongs in the conjunction indexed by $T_{d^*}$, is replaced by a conjunction of $|Dom_m(\pi)|$ terms. Every one of them can be indexed by either $T_{d^*}$ or $E_{d^*}$ depending on whether $\mu$ is an equality, an inequality, or neither. This is because the partial function $p_{d^*, t^*} \cup \{ (x, x_i) \}$, for any value $x_i$, assigns every free variable in $\mu$. The formulas $\mu$ and $\Aarg{\pi}{x}{\mu}$ share the same variables, with the only difference being that $x$ is not bound inside $\mu$. As such, the free variables of $\mu$ consists of $x$ and the free variables of $\psi_{d^*, t^*}$, which is exactly what $p_{d^*, t^*} \cup \{ (x, x_i) \}$ covers. Hence, the right-hand expression of the previous equation matches form \eqref{eqnRhoTempDecomp}.

The subcase $\psi_{d^*, t^*} = \mu \land \eta$, in which we decompose $\transfunc{p_{d^*, t^*}}{\psi_{d^*,t^*}}{m}$ into a conjunction of two terms rather than $|Dom_m(\pi)|$, is treated similarly. Obviously, the partial function $p_{d^*, t^*}$ assigns every free variable in $\mu$ and $\eta$.

\begin{flushleft}
\textit{Case} $\psi_{d^*, t^*} = \Earg{\pi}{x}{\mu}$ or $\psi_{d^*, t^*} = \mu \lor \eta$ ($\mu, \eta \in \subformulaset{\psi}$, $Dom_m(\pi) \neq \varnothing$)
\end{flushleft}

Let's first suppose that $\psi_{d^*, t^*} = \Earg{\pi}{x}{\mu}$. As with the subcase $\Aarg{\pi}{x}{\mu}$, $\rho$ decomposes the term $\transfunc{p_{d^*, t^*}}{\psi_{d^*, t^*}}{m}$ into $|Dom_m(\pi)|$ terms of the form $\transfunc{p_{d^*, t^*} \cup \{ (x, x_i) \}}{\mu}{m}$. However, unlike the subcase $\Aarg{\pi}{x}{\mu}$, these terms are joined by the operator $\lor$. If $|Dom_m(\pi)| \geq 2$, they create a disjunction inside the conjunction $d^* \in D$. As shown by the second member of the equation below, the resulting expression is not a conjunction of disjunctions.

\small
\begin{IEEEeqnarray*}{Rcl}
\IEEEeqnarraymulticol{3}{l}{
	\bigg(
		\transfunc{p_{d^*, t^*}}{\Earg{\pi}{x}{\mu}}{m}
		\land \!\!
		\bigwedge_{\substack{t \in T_{d^*} \\ t \neq t^*}}
		\transfunc{p_{d^*, t}}{\psi_{d^*, t}}{m}
		\land \!\!
		\bigwedge_{c \in C_{d^*}}
		R_{d^*, c}
	\bigg)
	\lor
	\bigvee_{\substack{d \in D \\ d \neq d^*}}\ldots
} 
\\ = &\\
\IEEEeqnarraymulticol{2}{l}{
	\bigg(
		\bigvee_{x_i \in Dom_m(\pi)}
		\transfunc{p_{d^*, t^*} \cup \{ (x, x_i) \}}{\mu}{m}
		\land \!\!
		\bigwedge_{\substack{t \in T_{d^*} \\ t \neq t^*}}
		\transfunc{p_{d^*, t}}{\psi_{d^*, t}}{m}
		\land \!\!
		\bigwedge_{c \in C_{d^*}}
		R_{d^*, c}
	\bigg)
	\lor
	\bigvee_{\substack{d \in D \\ d \neq d^*}}\ldots
}
\\ = &\\
\IEEEeqnarraymulticol{2}{l}{
	\bigvee_{x_i \in Dom_m(\pi)}
	\bigg(
		\transfunc{p_{d^*, t^*} \cup \{ (x, x_i) \}}{\mu}{m}
		\land \!\!
		\bigwedge_{\substack{t \in T_{d^*} \\ t \neq t^*}}
		\transfunc{p_{d^*, t}}{\psi_{d^*, t}}{m}
		\land \!\!
		\bigwedge_{c \in C_{d^*}}
		R_{d^*, c}
	\bigg)
	\lor
	\bigvee_{\substack{d \in D \\ d \neq d^*}}\ldots
}
\end{IEEEeqnarray*}
\normalsize

Therefore, we distribute the surrounding conjunctions, which are indexed by the sets $T_{d^*} \setminus \{ t^* \}$ and $C_{d^*}$, over the problematic disjunction. This results in $|Dom_m(\pi)|$ almost identical conjunctions that differ by the value $x_i$ inside their term $\transfunc{p_{d^*, t^*} \cup \{ (x, x_i) \}}{\mu}{m}$. As shown by the last member of the equation above, this new expression is a disjunction of conjunctions.

We established, with the subcase $\Aarg{\pi}{x}{\mu}$, that the partial function $p_{d^*, t^*} \cup \{ (x, x_i) \}$, for any value $x_i$, assigns every free variable in $\mu$. Thus, every term of the form $\transfunc{p_{d^*, t^*} \cup \{ (x, x_i) \}}{\mu}{m}$, depending on the nature of $\mu$, can be indexed by the set $T_{d^*}$ or $E_{d^*} \subseteq C_{d^*}$ associated with its conjunction. Hence, the expression obtained matches form \eqref{eqnRhoTempDecomp}.

The subcase $\psi_{d^*, t^*} = \mu \lor \eta$, in which we decompose $\transfunc{p_{d^*, t^*}}{\psi_{d^*,t^*}}{m}$ into a disjunction of two terms rather than $|Dom_m(\pi)|$, is treated similarly.

\begin{flushleft}
\textit{Case} $\psi_{d^*, t^*} = \Uarg{\mu}{\eta}$ or $\psi_{d^*, t^*} = \Rarg{\mu}{\eta}$ ($\mu, \eta \in \subformulaset{\psi}$)
\end{flushleft}

Let's first consider the case where $\psi_{d^*, t^*} = \Uarg{\mu}{\eta}$. By the definition of $\rho$  $\transfunc{p_{d^*, t^*}}{\psi_{d^*, t^*}}{m}$ is decomposable into a disjunction of two terms, namely  $\transfunc{p_{d^*, t^*}}{\eta}{m}$ and $\transfunc{p_{d^*, t^*}}{\mu}{m} \land (p_{d^*, t^*}, \Uarg{\mu}{\eta})$. It follows from the previous case that the conjunctions surrounding them must be distributed over the disjunction. The  process is displayed here:

\begin{IEEEeqnarray*}{Rcl}
\IEEEeqnarraymulticol{3}{l}{
	\bigg(
		\transfunc{p_{d^*, t^*}}{\Uarg{\mu}{\eta}}{m}
		\land \!\!
		\bigwedge_{\substack{t \in T_{d^*} \\ t \neq t^*}}
		\transfunc{p_{d^*, t}}{\psi_{d^*, t}}{m}
		\land \!\!
		\bigwedge_{c \in C_{d^*}}
		R_{d^*, c}
	\bigg)
	\lor
	\bigvee_{\substack{d \in D \\ d \neq d^*}}\ldots
} 
\\ = & 
\Bigg( & 
	\transfunc{p_{d^*, t^*}}{\eta}{m}
	\lor
	\Big(
		\transfunc{p_{d^*, t^*}}{\mu}{m}
		\land
		(p_{d^*, t^*}, \Uarg{\mu}{\eta})
	\Big) 
\\[-6pt] && 
	\land \!\!
	\bigwedge_{\substack{t \in T_{d^*} \\ t \neq t^*}}
	\transfunc{p_{d^*, t}}{\psi_{d^*, t}}{m}
	\land \!\!
	\bigwedge_{c \in C_{d^*}}
	R_{d^*, c}
\Bigg)
\lor
\bigvee_{\substack{d \in D \\ d \neq d^*}}\ldots 
\\[8pt] = & 
\IEEEeqnarraymulticol{2}{l}{
	\Bigg(
	\transfunc{p_{d^*, t^*}}{\eta}{m}
	\land \!\!
	\bigwedge_{\substack{t \in T_{d^*} \\ t \neq t^*}}
	\transfunc{p_{d^*, t}}{\psi_{d^*, t}}{m}
	\land \!\!
	\bigwedge_{c \in C_{d^*}}
	R_{d^*, c}
	\Bigg)
} 
\\[-8pt] & 
\lor & 
\Bigg(
	\transfunc{p_{d^*, t^*}}{\mu}{m}
	\land
	(p_{d^*, t^*}, \Uarg{\mu}{\eta})
	\land \!\!
	\bigwedge_{\substack{t \in T_{d^*} \\ t \neq t^*}}
	\transfunc{p_{d^*, t}}{\psi_{d^*, t}}{m}
	\land \!\!
	\bigwedge_{c \in C_{d^*}}
	R_{d^*, c}
\Bigg) 
\\[-8pt] & 
\lor & 
\bigvee_{\substack{d \in D \\ d \neq d^*}}\ldots 
\end{IEEEeqnarray*}

The partial function $p_{d^*, t^*}$ assigns every free variable inside the subformulas $\mu$ and $\eta$. Thus, the output terms $\transfunc{p_{d^*, t^*}}{\mu}{m}$ and $\transfunc{p_{d^*, t^*}}{\eta}{m}$, depending on the nature of $\mu$ and $\eta$, can be indexed by their respective copy of $T_{d^*}$ or $E_{d^*} \subseteq C_{d^*}$. The couple $(p_{d^*, t^*}, \Uarg{\mu}{\eta})$, by comparison, can be indexed by of $N_{d^*} \subseteq C_{d^*}$. Thus, the resulting expression is of the form \eqref{eqnRhoTempDecomp}.

The subcase $\psi_{d^*, t^*} = \Rarg{\mu}{\eta}$ is treated similarly since the decomposition of $\transfunc{p_{d^*, t^*}}{\Rarg{\mu}{\eta}}{m}$ is nearly identical. Since $p_{d^*, t^*}$ assigns every free variable in $\Rarg{\mu}{\eta}$, it does the same for $\eta$ and $\mu \land \eta$. Note though that while $\mu$ and $\eta$ are subformulas of $\psi$ because $\Rarg{\mu}{\eta} \in \subformulaset{\psi}$, this is usually not the case for $\mu \land \eta$. Therefore, the term $\transfunc{p_{d^*, t^*}}{\mu \land \eta}{m}$ cannot usually be indexed by the set $T_{d^*}$, but its decomposition $\transfunc{p_{d^*, t^*}}{\mu}{m} \land \transfunc{p_{d^*, t^*}}{\eta}{m}$ always can.

The above reasoning proves that any intermediate decomposition of the term $\transfunc{p}{\psi}{m}$  (with the rules for (in)equality omitted) will be of the form \eqref{eqnRhoTempDecomp}. However, we also showed that the induction step cannot be repeated indefinitely, since the depth of the terms monotonically decreases and the set $T_d$ eventually becomes empty for every conjunction $d \in D$. When this happens, the resulting equation is a conjunction of disjunction of the form  described in described in Lemma \ref{lemRhoDecomp}.
\showboxcond{} \end{proof}

A very similar decomposition can be applied to a predicate representing an \LTLE{} formula $\psi$. As expected, this notation follows the semantics of \LTLE{}.  By convention, if $Dom_m(\pi)$ is empty, then disjunctions and conjunctions indexed by $x_i \in Dom_m(\pi)$ default to $\FALSE$ and $\TRUE$ respectively. We keep this convention, but we also propose a cosmetic change that allows the above definition to better match the definition of $\rho$.

$\logicprop{p}{\accstate}{\overline{m}} = \TRUE$ and $\logicprop{p}{\rejstate}{\overline{m}} = \FALSE$ for any partial function $p : V \rightarrow \image{}$ and any trace $\overline{m}$. Hence, if $Dom_m(\pi)$ is empty, we have that $\logicprop{p}{\Earg{\pi}{x}{\psi}}{\overline{m}}$ and $\logicprop{p}{\Aarg{\pi}{x}{\psi}}{\overline{m}}$ are equivalent to $\logicprop{\varnothing}{\rejstate}{\overline{m}^2}$ and $\logicprop{\varnothing}{\accstate}{\overline{m}^2}$ respectively. This creates an equivalences between the predicate rules and the rules of $\rho$.  The decomposition of a predicate for a formula $\psi$ is given in lemma \ref{lemLogicDecomp}.
\begin{lem} \label{lemLogicDecomp}
For any message trace $\overline{m} = m_1, m_2, \ldots$, any \LTLE{} formula $\psi$, and any partial function $p : V \rightarrow \image{}$ that assigns every free variable in $\psi$, $\logicprop{p}{\psi}{\overline{m}}$ is equivalent to an expression of the form
\begin{equation}
\bigvee_{d \in D}
\Big(
	\bigwedge_{e \in E_d} \logicprop{p_{d, e}}{\varepsilon_{d, e}}{\overline{m}}
	\land
	\bigwedge_{n \in N_d} \logicprop{p_{d, n}}{\psi_{d, n}}{\overline{m}^2}i
	\land
	\bigwedge_{a \in A_d} \logicprop{\varnothing}{\accstate_{d, a}}{\overline{m}^2}
\Big)
\text{.}
\end{equation}
For all $d \in D$, $e \in E_d$, $n \in N_d$, and $a \in A_d$, $\varepsilon_{d, e}$ is an equality or inequality, $\psi_{d, n}$ is a subformula of $\psi$ (which includes $\psi$), $\accstate_{d, a}$ is either $\accstate$ or $\rejstate$, and $p_{d, e}$ and $p_{d, n}$ are partial functions from $V$ to $\image{}$ that assign every free variable in $\varepsilon_{d, e}$ and $\psi_{d, n}$ respectively.

Moreover, if $\timedepth{\psi} = 0$, then $N_d$ is empty for all $d \in D$. Otherwise, $N_d$ may not always be empty, and any $\psi_{d, n}$ fit only one of three descriptions:
\begin{itemize}
\item[1)] $\timedepth{\psi_{d, n}} < \timedepth{\psi}$;
\item[2)] $\timedepth{\psi_{d, n}} = \timedepth{\psi}$ and $\psi_{d, n} = \Uarg{\mu}{\eta}$ for some $\mu$, $\eta \in \subformulaset{\psi}$;
\item[3)] $\timedepth{\psi_{d, n}} = \timedepth{\psi}$ and $\psi_{d, n} = \Rarg{\mu}{\eta}$ for some $\mu$, $\eta \in \subformulaset{\psi}$.
\end{itemize}
\end{lem}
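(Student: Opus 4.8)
The plan is to mirror, almost line for line, the proof of Lemma~\ref{lemRhoDecomp}. The semantic rules defining $\logicprop{p}{\psi}{\overline{m}}$ in Table~\ref{tab:semantics} have exactly the same propositional shape as the rules defining $\transfuncDeux{(p, \psi)}{m}$: the Boolean connectives and the quantifiers $\exists_\pi$ and $\forall_\pi$ recombine their operands without advancing the trace (just as the corresponding $\rho$-rules keep the message $m$ fixed), whereas the temporal operators $\X$, $\U$ and $\R$ are the only rules that move to the successor trace $\overline{m}^2$ (just as they are the only $\rho$-rules that emit a couple $(p, \psi)$). Under this correspondence a couple $(p_{d, n}, \psi_{d, n})$ belonging to $N_d$ in Lemma~\ref{lemRhoDecomp} plays the role of $\logicprop{p_{d, n}}{\psi_{d, n}}{\overline{m}^2}$ here, a couple $(\varnothing, \accstate_{d, a})$ in $A_d$ plays the role of $\logicprop{\varnothing}{\accstate_{d, a}}{\overline{m}^2}$ (using the cosmetic identification of empty-domain quantifiers with $\logicprop{\varnothing}{\rejstate}{\overline{m}^2}$ and $\logicprop{\varnothing}{\accstate}{\overline{m}^2}$ introduced above), and an equality term $\transfunc{p_{d, e}}{\varepsilon_{d, e}}{m}$ in $E_d$ plays the role of $\logicprop{p_{d, e}}{\varepsilon_{d, e}}{\overline{m}}$.

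Concretely, I would run the same induction as before on the partially decomposed expression, using the analogue of the intermediate form~\eqref{eqnRhoTempDecomp} in which the not-yet-finished terms $\logicprop{p_{d, t}}{\psi_{d, t}}{\overline{m}}$ (indexed by $T_d$) all sit on the current trace $\overline{m}$, while the finished terms $R_{d, c}$ are equalities on $\overline{m}$, subformulas on $\overline{m}^2$, or $\accstate/\rejstate$ on $\overline{m}^2$. The base case and the case analysis on the outermost operator of $\psi_{d^*, t^*}$ are identical: the rules for $\Aarg{\pi}{x}{\mu}$ and $\mu \land \eta$ (non-empty domain) only enlarge a conjunction on $\overline{m}$; the rules for $\Earg{\pi}{x}{\mu}$ and $\mu \lor \eta$ introduce a disjunction over which the surrounding conjunctions must be distributed to restore the disjunction-of-conjunctions shape; and the rules for $\Uarg{\mu}{\eta}$ and $\Rarg{\mu}{\eta}$ do the same distribution, sending $\eta$ (and, for Release, the decomposition $\logicprop{p_{d^*, t^*}}{\mu}{\overline{m}} \land \logicprop{p_{d^*, t^*}}{\eta}{\overline{m}}$ of $\mu \land \eta$) back into $T_d \cup E_d$ on $\overline{m}$ while placing the continuation on $\overline{m}^2$ in $N_d$. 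The verification that every partial function still assigns all free variables of its formula---in particular that $p_{d^*, t^*} \cup \{ (x, x_i) \}$ covers the free variables of $\mu$ under a quantifier---is copied verbatim from the previous proof.

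For the depth classification I would reuse the same three-way split. The only rule that strictly lowers depth and lands on $\overline{m}^2$ is $\Xarg{\mu}$, which yields $\logicprop{p}{\mu}{\overline{m}^2}$ with $\timedepth{\mu} < \timedepth{\Xarg{\mu}} \le \timedepth{\psi}$ (case~1); more generally any $N_d$ term produced after stripping an $\X$ from inside a Boolean or quantifier context also satisfies case~1, even when the stripped subformula is itself an Until or Release, since its depth is then strictly smaller. The direct continuations of $\Uarg{\mu}{\eta}$ and $\Rarg{\mu}{\eta}$ keep the same depth and are syntactically an Until, respectively a Release, formula, giving cases~2 and~3. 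Termination follows exactly as in Lemma~\ref{lemRhoDecomp}: every term that remains in some $T_d$ is evaluated on $\overline{m}$ and has strictly smaller temporal depth than the term it came from (Boolean and quantifier steps at fixed depth reduce structural size), so each $T_d$ is eventually emptied and the expression collapses to the claimed form; when $\timedepth{\psi} = 0$ no temporal rule is ever applied, so no term is ever moved to $\overline{m}^2$ and every $N_d$ is empty.

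The main obstacle, and the only point where this proof genuinely differs from that of Lemma~\ref{lemRhoDecomp}, is the bookkeeping of the trace argument. I expect it to be routine once one records the single invariant that I would make explicit and maintain through the induction: the temporal operators (together with the empty-domain quantifier convention) are the \emph{only} sources of terms on $\overline{m}^2$, so that the $E_d$ terms always stay on $\overline{m}$ while the $N_d$ and $A_d$ terms always sit on $\overline{m}^2$. Equalities cause no difficulty because, under the $P$ semantics, $\logicprop{p}{\varepsilon_{d, e}}{\overline{m}}$ depends only on $p$ and not on $\overline{m}$, so leaving them on the current trace is harmless.
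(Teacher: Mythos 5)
Your proposal is correct and takes exactly the route the paper intends: the paper's own proof of Lemma~\ref{lemLogicDecomp} consists of the single sentence that it is ``completely analogous to the proof of Lemma~\ref{lemRhoDecomp}'', and your write-up is precisely that analogy spelled out, including the one genuinely new bookkeeping point (equality terms stay on $\overline{m}$ while the $N_d$ and $A_d$ terms move to $\overline{m}^2$, with the empty-domain quantifier convention supplying the $A_d$ terms). No gaps.
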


\begin{proof}
Completely analogous to the proof of Lemma \ref{lemRhoDecomp}.
\showboxcond{}\end{proof}

Lemmas \ref{lemRhoDecomp} and \ref{lemLogicDecomp} identify a very strong connection between the semantics of \LTLE{} and the transition function $\rho$. In fact, it should come as no surprise, at this point, that their decompositions can be paired in such a way that they match, term for term.
\begin{lem} \label{lemDecompEquiv}
For any message trace $\overline{m} = m, m_2, m_3, \ldots$, any \LTLE{} formula $\psi$, and any partial function $p : V \rightarrow \image{}$ that assigns every free variable in $\psi$, the terms $p_{d, e}$, $p_{d, n}$, $\varepsilon_{d, e}$, $\psi_{d, n}$, and $\accstate_{d, a}$ in the decompositions of $\transfunc{p}{\psi}{m}$ and $\logicprop{p}{\psi}{\overline{m}}$ can be labelled in such a way that they perfectly match across the two decompositions for any $d \in D$, $e \in E_d$, $n \in N_d$, and $a \in A_d$.
\end{lem}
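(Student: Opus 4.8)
The plan is to run the two recursive decompositions of Lemmas \ref{lemRhoDecomp} and \ref{lemLogicDecomp} in lockstep and to maintain, as an invariant throughout, a term-for-term bijection between the partial decomposition of $\transfunc{p}{\psi}{m}$ and that of $\logicprop{p}{\psi}{\overline{m}}$. The observation that makes this possible is that the defining rules of $\rho$ and the semantic rules of $P$ are in exact structural correspondence: each connective or operator is handled by the same Boolean combination ($\lor$ or $\land$) of the same number of recursive sub-terms, with identical updates to the partial function (the two quantifier rules both extend $p$ to $p \cup \{(x, x_i)\}$ for each $x_i \in Dom_m(\pi)$, and every other rule leaves $p$ untouched). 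Since Lemma \ref{lemLogicDecomp} is obtained by the very same decomposition procedure as Lemma \ref{lemRhoDecomp}, the two processes apply exactly the same sequence of rules to exactly the same formulas.

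First I would fix the pairing precisely. At any stage of the simultaneous decomposition, each term of the intermediate $\rho$-expression of the form \eqref{eqnRhoTempDecomp} is paired with the corresponding term of the intermediate $P$-expression, as follows: a still-decomposable term $\transfunc{p_{d, t}}{\psi_{d, t}}{m}$ is paired with $\logicprop{p_{d, t}}{\psi_{d, t}}{\overline{m}}$; an equality or inequality term $\transfunc{p_{d, e}}{\varepsilon_{d, e}}{m}$ with $\logicprop{p_{d, e}}{\varepsilon_{d, e}}{\overline{m}}$; a next-state couple $(p_{d, n}, \psi_{d, n})$ with $\logicprop{p_{d, n}}{\psi_{d, n}}{\overline{m}^2}$; and an accept/reject couple $(\varnothing, \accstate_{d, a})$ with $\logicprop{\varnothing}{\accstate_{d, a}}{\overline{m}^2}$. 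The pairing must preserve every piece of data the lemma names: the index sets $D$, $E_d$, $N_d$, $A_d$, the partial functions $p_{d, \cdot}$, the formulas $\psi_{d, \cdot}$ and $\varepsilon_{d, \cdot}$, and the constants $\accstate_{d, a}$.

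I would then argue by induction on the decomposition. The base case is immediate: both decompositions begin as the single paired term $\transfunc{p}{\psi}{m}$, resp.\ $\logicprop{p}{\psi}{\overline{m}}$, sharing the same $p$ and $\psi$. For the inductive step, whenever the $\rho$-decomposition selects a decomposable term $\transfunc{p_{d^*, t^*}}{\psi_{d^*, t^*}}{m}$ and applies the corresponding rule, I apply the matching semantic rule to its paired term $\logicprop{p_{d^*, t^*}}{\psi_{d^*, t^*}}{\overline{m}}$. A case analysis on the top operator of $\psi_{d^*, t^*}$ then shows that the two rule applications produce the same number of sub-terms, combined by the same connective, with identical partial functions and sub-formulas, so the bijection extends to the freshly created terms. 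The distribution steps needed to restore the disjunction-of-conjunctions form (in the $\lor$, $\exists$, $\U$ and $\R$ cases) are syntactically identical on both sides and therefore preserve the pairing as well.

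The point to verify carefully, and really the conceptual heart of the construction, is the correspondence of the next-state couples with the shifted-trace predicates: the couple $(p_{d, n}, \psi_{d, n})$ produced by the $\X$, $\U$ or $\R$ rule of $\rho$ encodes exactly the obligation of evaluating $\psi_{d, n}$ under $p_{d, n}$ starting at the next message, which is precisely what $\logicprop{p_{d, n}}{\psi_{d, n}}{\overline{m}^2}$ denotes in the semantics. The empty-domain quantifier subcase is absorbed by the cosmetic convention introduced just before Lemma \ref{lemLogicDecomp}, under which $\logicprop{\varnothing}{\rejstate}{\overline{m}^2}$ and $\logicprop{\varnothing}{\accstate}{\overline{m}^2}$ pair with the couples $(\varnothing, \rejstate)$ and $(\varnothing, \accstate)$ output by $\rho$. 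Since the induction terminates for the same reason as in the proof of Lemma \ref{lemRhoDecomp} (the depth of the selected sub-formula strictly decreases with each rule application, so every $T_d$ is eventually exhausted), and the bijection is preserved at every step, the two final decompositions match term for term, which is the claim.
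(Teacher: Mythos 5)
Your proof is correct. The paper itself gives no proof of Lemma \ref{lemDecompEquiv} --- it is stated without justification, as an evident consequence of the structural parallelism between the rules of $\rho$ and the semantic rules of $P$ --- and your lockstep induction (pairing decomposable terms, equality terms, next-state couples with shifted-trace predicates, and the empty-domain accept/reject couples via the paper's cosmetic convention) is exactly the formalization of that parallelism, so it supplies the argument the paper implicitly relies on.
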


\subsection{$A_\varphi$ accepts $\overline{m} \Rightarrow \overline{m} \models \varphi$}
The proof relies upon the following theorem, which is straightforwardly derived from the definitions of $\U$ and $\R$ and describes the shape of the tree resulting from a run of $\transfunc{p}{\psi}{\overline{m}}$ with $\psi = \U $ or $\psi=\R$.

\begin{lem} \label{lemSubrunUR}
Let $\psi$, $\mu$, and $\eta$ be \LTLE{} formul\ae and $p : V \rightarrow \image{}$ be a partial function that assigns every free variable in $\psi$. Suppose there exists an $F_\psi$-accepting run of $\rho$ on a trace $\overline{m}$ with root $(p, \psi)$. Let's denote this run by $\rhorun{p}{\psi}{\overline{m}}$.
\begin{itemize}
\item[1)] If $\psi = \Uarg{\mu}{\eta}$, then:
	\begin{itemize}
	\item[$\bullet$] $\exists \ j \geq 1$ : $\rho$ admits an $F_\eta$-accepting run $\rhorun{p}{\eta}{\overline{m}^j}$ with root $(p, \eta)$;
	\item[$\bullet$] $\forall \ 1 \leq i < j$, $\rho$ admits an $F_\mu$-accepting run $\rhorun{p}{\mu}{\overline{m}^i}$ with root $(p, \mu)$.
	\end{itemize}
\item[2)] If $\psi = \Rarg{\mu}{\eta}$, then either:
	\begin{itemize}
	\item[$\bullet$] $\forall \ i \geq 1$, $\rho$ admits an $F_\eta$-accepting run $\rhorun{p}{\eta}{\overline{m}^i}$ with root $(p, \eta)$; or
	\end{itemize}
	\begin{itemize}
	\item[$\bullet$] $\exists \ j \geq 1$ : $\rho$ admits an $F_{\mu \land \eta}$-accepting run $\rhorun{p}{\mu \land \eta}{\overline{m}^j}$ with root $(p, \mu \land \eta)$;
	\item[$\bullet$] $\forall \ 1 \leq i < j$, $\rho$ admits an $F_\eta$-accepting run $\rhorun{p}{\eta}{\overline{m}^i}$ on $\overline{m}^i$  with root $(p, \eta)$.
	\end{itemize}
\end{itemize}
\end{lem}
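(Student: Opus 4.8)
The plan is to analyse the \emph{spine} of the given run, that is, the maximal chain of nodes carrying the label $(p,\psi)$ obtained by repeatedly following the disjunct of $\rho$ in which $\psi$ reproduces itself. By Lemma~\ref{lemRhoDecomp}, the one-step decomposition of $\transfunc{p}{\Uarg{\mu}{\eta}}{m}$ is a disjunction in which every disjunct either comes entirely from the decomposition of $\transfunc{p}{\eta}{m}$ (and then contains no copy of $(p,\Uarg{\mu}{\eta})$), or contains the couple $(p,\Uarg{\mu}{\eta})$ alongside a conjunctive clause coming from $\transfunc{p}{\mu}{m}$. Since, by Definition~\ref{dfnRunRho}, the children of a node labelled $(p,\Uarg{\mu}{\eta})$ realise exactly one such disjunct, at every message the spine either terminates by spawning the children of an $\eta$-branch, or continues by spawning the children of a $\mu$-branch next to a fresh node $(p,\Uarg{\mu}{\eta})$. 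The same reading of $\transfunc{p}{\Rarg{\mu}{\eta}}{m}$ shows that a Release spine either terminates into a $\mu \land \eta$-branch or continues into an $\eta$-branch beside a fresh $(p,\Rarg{\mu}{\eta})$.

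The termination behaviour is governed by whether $\psi$ is itself accepting. For Until, $\Uarg{\mu}{\eta} \notin F_{(\Uarg{\mu}{\eta})} = F_\mu \cup F_\eta$, so the branch that follows an infinite spine would, from some point on, visit only the non-accepting state $\Uarg{\mu}{\eta}$, contradicting the hypothesis that the run is $F_\psi$-accepting; hence the spine is finite and terminates at some least index $j \geq 1$ with an $\eta$-branch, while continuing with a $\mu$-branch at each $1 \leq i < j$, which is precisely item 1). For Release, $\Rarg{\mu}{\eta} \in F_{(\Rarg{\mu}{\eta})}$, so an infinite spine is admissible; this produces the dichotomy of item 2), the spine being either infinite (an $\eta$-branch at every $i \geq 1$) or terminating at some index $j$ with a $\mu \land \eta$-branch preceded by $\eta$-branches.

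It then remains to turn each branch detached from the spine into a genuine accepting subrun. For the $\eta$-branch split off at index $i$, I would relabel its originating spine node from $(p,\psi)$ to $(p,\eta)$ and keep the subtree below it unchanged; by the classification above, its children already form a legal first level of a run of $\rho$ with root $(p,\eta)$ on $\overline{m}^i$, so the relabelled tree is such a run. The same construction yields the $\mu$-subruns in the Until case and the $\mu \land \eta$-subrun in the Release case.

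The main obstacle is to show that these extracted subruns inherit the \emph{correct} acceptance condition, e.g. that the $\eta$-subrun is $F_\eta$-accepting and not merely $F_\psi$-accepting. The key fact is subformula localisation: iterating Lemma~\ref{lemRhoDecomp}, every state occurring in a run rooted at $(p,\eta)$ is a subformula of $\eta$ together with $\accstate$ and $\rejstate$, so the only states of $F_\psi$ (namely $F_\mu \cup F_\eta$, augmented by $\psi$ itself in the Release case) that can appear inside the $\eta$-subrun are exactly those of $F_\eta$; consequently a branch of the subrun visits $F_\psi$ infinitely often if and only if it visits $F_\eta$ infinitely often. Since every branch of a subrun is the tail of a branch of the ambient $F_\psi$-accepting run, it visits $F_\psi$, hence $F_\eta$, infinitely often, so the subrun is $F_\eta$-accepting; relabelling the single root alters only a finite prefix and cannot affect an ``infinitely often'' condition. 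The arguments for $F_\mu$ and for $F_{\mu \land \eta} = F_\mu \cup F_\eta$ follow the same pattern. One must also note that no branch of an $F_\psi$-accepting run ever settles in the pit state $\rejstate$, which lies outside every accepting set; this is guaranteed by acceptance and ensures the extracted subruns contain no rejecting branch.
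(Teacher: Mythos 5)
Your overall strategy is the same as the paper's: follow the chain (your ``spine'') of nodes labelled $(p,\psi)$, use Lemma~\ref{lemRhoDecomp} to split each one-step transition into disjuncts drawn from $D_\eta$ versus $D_\mu$ (resp.\ $D_{(\mu\land\eta)}$ versus $D_\eta$), rule out or admit the infinite spine according to whether $\psi\in F_\psi$, extract subruns at each spine position, and recover the right acceptance sets via the intersection $F_\psi\cap(\subformulaset{\eta}\cup\lbrace\accstate\rbrace)=F_\eta$ (and likewise for $\mu$ and $\mu\land\eta$). All of that matches the paper, including the observation that a branch consisting forever of $(p,\Uarg{\mu}{\eta})$ contradicts $F_\psi$-acceptance because $\Uarg{\mu}{\eta}\notin F_{(\Uarg{\mu}{\eta})}$, whereas $\Rarg{\mu}{\eta}\in F_{(\Rarg{\mu}{\eta})}$ permits the infinite spine. (One small imprecision: the children of a spine node realise \emph{at least} one disjunct, not ``exactly one''; you need to take the least $j$ at which a $D_\eta$ disjunct is realised, as the paper does, since several disjuncts can be witnessed simultaneously.)

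The one step that does not work as written is the extraction by ``relabel the originating spine node and keep the subtree below it unchanged.'' At a non-terminating spine position $i<j$, the spine node's children necessarily include the next spine node $(p,\Uarg{\mu}{\eta})$ (and possibly optional extra children from other disjuncts). If you merely relabel that node $(p,\mu)$ and keep its subtree, the resulting tree has a child labelled $(p,\Uarg{\mu}{\eta})$, which in general does not appear in $\transfunc{p}{\mu}{m_i}$; condition~2 of Definition~\ref{dfnRunRho} then fails, so what you have produced is not a run of $\rho$ with root $(p,\mu)$. The same problem recurs at every spine position in the Release case (the relabelled $(p,\eta)$ node would retain the child $(p,\Rarg{\mu}{\eta})$), and moreover each spine node is the parent of the next, so a single relabelling cannot serve all the extractions at once. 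The repair is exactly the paper's construction: for each $i$ introduce a \emph{fresh} root $(p,\mu)$, attach to it only the children coming from the chosen $D_\mu$ disjunct (the set the paper calls $U_2^i$), which by construction satisfy $\transfunc{p}{\mu}{m_i}$, and graft beneath them their original subtrees. With that surgery in place, your acceptance argument goes through unchanged, since discarding the spine child and any stray children only removes branches and relabelling the root alters a finite prefix.
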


Informally, the above lemma states that, if the automata admits an accepting run over $\phi=\Uarg{\mu}{\eta}$ (resp. $\phi=\Rarg{\mu}{\eta}$) then it admits  accepting runs for $\mu$ and $\eta$ on such prefixes and/or suffixes of the trace $\overline{m}$ as to be consistent with the semantics of these operators.

\begin{proof}

Because $\U$ and $\R$ lead to similar proofs, we give the details for $\U$, and focus on the key differences that occur with $\R$.

\begin{flushleft}
\textit{1)} $\psi = \Uarg{\mu}{\eta}$
\end{flushleft}

The first step of the derivation of run $\rhorun{p}{\psi}{\overline{m}}$ is dictated by the rule
\begin{equation*}
\transfunc{p}{\Uarg{\mu}{\eta}}{m} =
\transfunc{p}{\eta}{m} \lor
\Big(
\transfunc{p}{\mu}{m} \land
(p, \Uarg{\mu}{\eta})
\Big)
\text{.}
\end{equation*}

It is clear from the premise that $p$ assigns every free variable in $\mu$ and $\eta$. Thus, we can apply Lemma \ref{lemRhoDecomp} to both $\transfunc{p}{\eta}{m}$ and $\transfunc{p}{\mu}{m}$, which yields:
\begin{equation} \label{eqnUntilDecomp}
\transfunc{p}{\Uarg{\mu}{\eta}}{m} =
\bigvee_{d \in D_\eta}
\Big(\bigwedge_{c \in C_d} R_{d, c}\Big)
\lor
\bigvee_{d \in D_\mu}
\Big(\bigwedge_{c \in C_d} R_{d, c} \land (p, \Uarg{\mu}{\eta})\Big)
\text{.}
\end{equation}

The sets $D_\eta$ and $D_\mu$ encompass the decomposition of  $\transfunc{p}{\eta}{m}$ and \\ $\transfunc{p}{\mu}{m}$ respectively.

By definition \ref{dfnRunRho}, in a run of $\rho$, the children of a node $(p, \Uarg{\mu}{\eta})$ must include every couple $R_{d, c} \in D_\eta \cup D_\mu$  in the right-hand side of  \eqref{eqnUntilDecomp} Therefore, only two scenarios are possible:
\begin{itemize}
\item[(i)] For any $i \geq 1$, there exists a distinguished index $d_i$ such that  message $m_i$ is in $D_\mu$;
\item[(ii)] For some $j \geq 1$, there exists a distinguished index $d_j$ such that message $m_j$ is in $D_\eta$, and for any $1 \leq i < j$, an index $d_i$ such that message $m_i$ is in $D_\mu$.
\end{itemize}

The first scenario would occur if $\mu$ is verified by every message in $\overline{m}$ while the second occurs if $\mu$ holds continuously until some point $j$, where $\eta$ holds (these scenarios are not mutually exclusive because two or more indexes can be represented by the child nodes, but cover every possibility). Indeed, either some $j \geq 1$ linked to $D_\eta$ exists, in which case, the smallest possible $j$ satisfies (ii); or it does not. In that case, (i) holds. Note also that every time a chosen index is in $D_\mu$, a node $(p, \Uarg{\mu}{\eta})$ appears as a child, so $D_\mu$ or $D_\eta$ must be picked for the next message as well.

It turns out that the run $\rhorun{p}{\psi}{\overline{m}}$ follows scenario (ii). If it did not, then according to (i), and the previous discussion, there would be a branch in $\rhorun{p}{\psi}{\overline{m}}$ whose nodes are always $(p, \Uarg{\mu}{\eta})$. Since $\psi = \Uarg{\mu}{\eta}$ is not in $F_\psi$, the run would not be $F_\psi$-accepting, thus contradicting our premise.
Scenario (ii) echoes the desired results in regard to values $i, j$ and formulas $\mu, \eta$. It remains to extract, for any relevant value $i$ or $j$, the desired subrun from $\rhorun{p}{\psi}{\overline{m}}$. We only cover the details for $i$, as both cases lead to almost identical proofs.

Since index $d_i$ is in $D_\mu$, the children of a node $(p, \Uarg{\mu}{\eta})$ linked to message $m_i$ can be of three ``types''. They can be either $(p, \Uarg{\mu}{\eta})$ (type 1, mandatory), a couple $R_{d, c}$ in equation \eqref{eqnUntilDecomp} where $d = d_i$ (type 2, mandatory), or a couple $R_{d, c}$ in \eqref{eqnUntilDecomp} where $d \neq d_i$ (type 3, optional). To make sense of type 3, remember that an accepting run may include extra nodes as long as it respects $\rho$ and $F$. Now, observe that the union of all children of type 2, denoted $U_{2}^i$, satisfies the conjunction $\land_{c \in C_{d_i}}$ in \eqref{eqnUntilDecomp} if $m = m_i$. Thus, $U_{2}^i$ satisfies $\transfunc{p}{\mu}{m_i}$. Hence, if we take a root node $(p, \mu)$, set $U_{2}^i$ as its children, and keep every subrun of $\rhorun{p}{\psi}{\overline{m}}$ whose root is in $U_{2}^i$, we get a run $\rhorun{p}{\mu}{\overline{m}^i}$ starting at $m_i$ and with root $(p, \mu)$. 
We must now prove that $\rhorun{p}{\mu}{\overline{m}^i}$ is $F_\mu$-accepting. Since the original run is $F_{(\Uarg{\mu}{\eta})}$-accepting, all of its branches visit $F_{(\Uarg{\mu}{\eta})}$ infinitely often. This remains true for every branch in $\rhorun{p}{\mu}{\overline{m}^i}$ because they are infinite suffixes of branches in the original run ($i$ is finite). However, the formulas in $\rhorun{p}{\mu}{\overline{m}^i}$ are limited to subformulas of $\mu$ (Lemma \ref{lemRhoDecomp}), so what is visited infinitely often is actually $F_{(\Uarg{\mu}{\eta})} \cap (\subformulaset{\mu} \cup \lbrace\accstate\rbrace) = F_\mu$.

As we mentioned earlier, the proof for $j$ is very similar. The set $U_{2}^j$, where $d_j \in D_\eta$, identifies the desired subrun. The only notable difference is that a node $(p, \Uarg{\mu}{\eta})$, for message $m_j$, is not required to have a child of type 1.

\begin{flushleft}
\textit{2)} $\psi = \Rarg{\mu}{\eta}$
\end{flushleft}

The beginning of the run $\rhorun{p}{\psi}{\overline{m}}$ is dictated by the rule
\begin{equation*}
\transfunc{p}{\Rarg{\mu}{\eta}}{m} =
\transfunc{p}{\mu \land \eta}{m} \lor
\Big(
\transfunc{p}{\eta}{m} \land
(p, \Rarg{\mu}{\eta})
\Big)
\text{,}
\end{equation*}

which also applies each time the couple $(p, \Rarg{\mu}{\eta})$ appears in the run. As with $\U$, we can apply Lemma \ref{lemRhoDecomp} and distribute $(p, \Rarg{\mu}{\eta})$ over the decomposition of $\transfunc{p}{\eta}{m}$ to get
\begin{equation} \label{eqnReleaseDecomp}
\transfunc{p}{\Rarg{\mu}{\eta}}{m} =
\bigvee_{d \in D_{(\mu \land \eta)}}
\Big(\bigwedge_{c \in C_d} R_{d, c}\Big)
\lor
\bigvee_{d \in D_\eta}
\Big(\bigwedge_{c \in C_d} R_{d, c} \land (p, \Rarg{\mu}{\eta})\Big)
\end{equation}

where the set $D_{(\mu \land \eta)}$ encompass the decomposition of $\transfunc{p}{\mu \land \eta}{m}$.

The scenarios (i) and (ii) introduced for $\Uarg{\mu}{\eta}$ also apply in this case, but with minor adaptations regarding the index sets:
\begin{itemize}
\item[(i)] For any $i \geq 1$, there exists a distinguished index   $d_i$ such that  message $m_i$ is in $D_\eta$;
\item[(ii)] For some $j \geq 1$, there exists a distinguished index  $d_j$ such that message $m_j$ is in $D_{(\mu \land \eta)}$, and for any $1 \leq i < j$, there exists a distinguished index   $d_i$ such that message $m_i$ is in $D_\eta$.
\end{itemize}

However, by comparison to $\Uarg{\mu}{\eta}$, it is possible for $\rhorun{p}{\psi}{\overline{m}}$ to follow scenario (i). This is because $\psi$, this time, is included in $F_\psi$ due to being a Release formula. Until formulas, by comparison, are not accepting. Thus, even though (i) implies the existence of a branch in $\rhorun{p}{\psi}{\overline{m}}$ whose nodes are always $(p, \Rarg{\mu}{\eta})$, such a branch visits $F_\psi$ infinitely often and does not contradict our premise that the run is $F_\psi$-accepting. In fact, (i) and (ii) together echo the result stated in this lemma. The remaining of the proof for $i$ is analogous to the one for $i$ with $\U$. The same goes for $j$.

While every branch of the derived runs $\rhorun{p}{\mu \land \eta}{\overline{m}^j}$ and $\rhorun{p}{\eta}{\overline{m}^i}$ visits $F_\psi$ infinitely often, their formulas are restricted to subformulas of $\mu \land \eta$ and $\eta$ respectively (Lemma \ref{lemRhoDecomp}). Thus, the sets that are visited infinitely often are actually $F_\psi \cap (\subformulaset{\mu \land \eta} \cup \lbrace\accstate\rbrace) = F_{(\mu \land \eta)}$ and $F_\psi \cap (\subformulaset{\eta} \cup \lbrace\accstate\rbrace) = F_\eta$.
\showboxcond{} \end{proof}

We can now state the main lemmas of correction and completeness, indicating that an automata $A_\varphi$  admits an accepting run for sequence $\psi$ iff $\overline{m} \models \psi$.

\begin{lem} \label{lemRunToPredicate}
For any \LTLE{} formula $\psi$ and any partial function $p : V \rightarrow \image{}$ that assigns every free variable in $\psi$, if there exists an $F_\psi$-accepting run of $\rho$ on a trace $\overline{m} = m_1, m_2, \ldots$ and with root $(p, \psi)$, then $\logicprop{p}{\psi}{\overline{m}}$ is true.
\end{lem}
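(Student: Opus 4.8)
The plan is to proceed by induction on the temporal depth $\timedepth{\psi}$, using the three decomposition lemmas to reduce a single step of the run to a matching disjunct of the semantic predicate. The guiding idea is that, by Lemma~\ref{lemDecompEquiv}, the decompositions of $\transfunc{p}{\psi}{m_1}$ and of $\logicprop{p}{\psi}{\overline{m}}$ share the very same indexing sets $D$, $E_d$, $N_d$, $A_d$ and the very same data $p_{d,e}$, $p_{d,n}$, $\varepsilon_{d,e}$, $\psi_{d,n}$, $\accstate_{d,a}$; hence it suffices to exhibit a single disjunct $d^\ast \in D$ all of whose conjuncts hold on the semantic side, and to read off that disjunct from the given accepting run.

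First I would show how an accepting run commits to such a disjunct. By Definition~\ref{dfnRunRho}, the children of the root $(p,\psi)$ collectively satisfy $\transfunc{p}{\psi}{m_1}$, which is a disjunction of conjunctions by Lemma~\ref{lemRhoDecomp}; so the run realises at least one disjunct $d^\ast$, meaning that for every conjunct of $d^\ast$ a corresponding child (or subrun) is present. The key observation is that $\rejstate$ is a non-accepting pit: any branch that enters it stays there forever and never revisits $F_\psi$. Consequently, in the realised disjunct every equality/inequality term $\transfunc{p_{d^\ast,e}}{\varepsilon_{d^\ast,e}}{m_1}$ must evaluate to $(\varnothing,\accstate)$ — so $\varepsilon_{d^\ast,e}$ genuinely holds under $p_{d^\ast,e}$ — and every term $(\varnothing,\accstate_{d^\ast,a})$ must in fact be $(\varnothing,\accstate)$ rather than $(\varnothing,\rejstate)$. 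Via Lemma~\ref{lemDecompEquiv}, the matching semantic terms $\logicprop{p_{d^\ast,e}}{\varepsilon_{d^\ast,e}}{\overline{m}}$ and $\logicprop{\varnothing}{\accstate_{d^\ast,a}}{\overline{m}^2}$ are therefore true. This already settles the base case $\timedepth{\psi}=0$, where Lemma~\ref{lemRhoDecomp} guarantees $N_{d^\ast}=\varnothing$, so no couples remain.

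It then remains to treat the couple terms $(p_{d^\ast,n},\psi_{d^\ast,n})$ of the realised disjunct, and here the induction proper enters. The subtree of the run hanging below such a couple is itself a run of $\rho$ on $\overline{m}^2$ rooted at $(p_{d^\ast,n},\psi_{d^\ast,n})$; since its branches are suffixes of accepting branches but its labels are confined to $\subformulaset{\psi_{d^\ast,n}}\cup\{\accstate\}$, the set visited infinitely often is $F_\psi\cap(\subformulaset{\psi_{d^\ast,n}}\cup\{\accstate\})=F_{\psi_{d^\ast,n}}$, so this subrun is $F_{\psi_{d^\ast,n}}$-accepting. By the depth trichotomy of Lemma~\ref{lemRhoDecomp}, $\psi_{d^\ast,n}$ either has $\timedepth{\psi_{d^\ast,n}}<\timedepth{\psi}$, in which case the induction hypothesis immediately yields $\logicprop{p_{d^\ast,n}}{\psi_{d^\ast,n}}{\overline{m}^2}$, or else $\psi_{d^\ast,n}$ is an Until or Release formula of the same temporal depth.

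The last situation is exactly where I expect the main obstacle, and it is the reason temporal depth — rather than formula size — is the right induction measure: the transition rules for $\U$ and $\R$ reproduce the formula itself, so a structural induction would be circular. I would dispose of it with Lemma~\ref{lemSubrunUR}. Given the $F_{\psi_{d^\ast,n}}$-accepting subrun of a $\Uarg{\mu}{\eta}$ (resp. $\Rarg{\mu}{\eta}$) on $\overline{m}^2$, Lemma~\ref{lemSubrunUR} produces accepting subruns of $\mu$ and $\eta$ on suitable suffixes of $\overline{m}^2$; since $\timedepth{\mu},\timedepth{\eta}<\timedepth{\psi}$, the induction hypothesis applies to each, and unrolling the semantic clause for $\U$ (resp. $\R$) from Table~\ref{tab:semantics} reassembles these truths into $\logicprop{p_{d^\ast,n}}{\psi_{d^\ast,n}}{\overline{m}^2}$. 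Equivalently, one first proves the statement for top-level $\U$/$\R$ formulas of depth $k$ from the hypothesis at depth $<k$, and then invokes it as a sub-result for the couple terms. With every conjunct of the disjunct $d^\ast$ shown true on the semantic side, Lemma~\ref{lemLogicDecomp} delivers $\logicprop{p}{\psi}{\overline{m}}$, completing the induction.
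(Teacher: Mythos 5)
Your proposal is correct and follows essentially the same route as the paper's own proof: strong induction on temporal depth, extraction of the realised disjunct $d^\ast$ from the accepting run via Lemmas \ref{lemRhoDecomp}--\ref{lemDecompEquiv}, the intersection argument showing each subrun is $F_{\psi_{d^\ast,n}}$-accepting, and Lemma \ref{lemSubrunUR} to break the same-depth $\U$/$\R$ case down to strictly smaller depth. Your treatment of the $A_{d^\ast}$ terms (forcing $\accstate$ because $\rejstate$ is a non-accepting pit) is in fact slightly cleaner than the paper's, but the argument is the same.
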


\begin{proof}
We proceed by strong induction on the temporal depth of \LTLE{} formul\ae. Recall that by Lemmas \ref{lemRhoDecomp} and \ref{lemLogicDecomp},  the decomposition of $\transfunc{p}{\psi}{m_1}$ and $\logicprop{p}{\psi}{\overline{m}}$ can be given in normal disjunctive form as follows and that Lemma \ref{lemDecompEquiv} assures us that for every $d \in D$, the terms of these two formulas can be matched to one another.

\begin{equation} \label{eqnRhoDecompLem}
\transfunc{p}{\psi}{m_1} =
\bigvee_{d \in D}
\Big(
\bigwedge_{e \in E_d} \transfunc{p_{d, e}}{\varepsilon_{d, e}}{m_1}
\land
\bigwedge_{n \in N_d} (p_{d, n}, \psi_{d, n})
\land
\bigwedge_{a \in A_d} (\varnothing, \accstate_{d, a})
\Big)
\end{equation}

$\logicprop{p}{\psi}{\overline{m}} \Leftrightarrow$\\
\begin{equation} \label{eqnLogicDecompLem}
\bigvee_{d \in D}
\Big(
\bigwedge_{e \in E_d} \logicprop{p_{d, e}}{\varepsilon_{d, e}}{\overline{m}}
\land
\bigwedge_{n \in N_d} \logicprop{p_{d, n}}{\psi_{d, n}}{\overline{m}^2}
\land
\bigwedge_{a \in A_d} \logicprop{\varnothing}{\accstate_{d, a}}{\overline{m}^2}
\Big)
\end{equation}


\begin{flushleft}
\textit{Base case}: Lemma \ref{lemRunToPredicate} holds for any formula of depth 0.
\end{flushleft}
Let $\timedepth{\psi} = 0$. This case is covered by Lemmas \ref{lemRhoDecomp} and \ref{lemLogicDecomp}, which state that $N_d$ and $A_d$ are empty for all $d \in D$. As a result, the  equations from  \ref{lemRhoDecomp} and \ref{lemLogicDecomp} can be rewritten as:
\begin{equation} \label{eqnRhoDecompLemBase}
\transfunc{p}{\psi}{m_1} =
\bigvee_{d \in D}
\Big(
\bigwedge_{e \in E_d} \transfunc{p_{d, e}}{\varepsilon_{d, e}}{m_1}
\Big)
\end{equation}
\begin{equation} \label{eqnLogicDecompLemBase}
\logicprop{p}{\psi}{\overline{m}} \Leftrightarrow
\bigvee_{d \in D}
\Big(
\bigwedge_{e \in E_d} \logicprop{p_{d, e}}{\varepsilon_{d, e}}{\overline{m}}
\Big)
\text{.}
\end{equation}

Since every formula $\varepsilon_{d, e}$ is either an equality or an inequality, it follows from the definition of $\rho$ that $\transfunc{p_{d, e}}{\varepsilon_{d, e}}{m_1}$ can only output $(\varnothing, \accstate)$ or $(\varnothing, \rejstate)$ and that the couple $(\varnothing, \accstate)$ is output if and only if $\logicprop{p_{d, e}}{\varepsilon_{d, e}}{\overline{m}}$ is true.

Since the only successor of a node $(\varnothing, \accstate)$ is itself, and since $\accstate$ is in $F_\psi$, this node generates an accepting branch. A node $(\varnothing, \rejstate)$ also loops back on itself, but since $\rejstate$ is not in $F_\psi$, the resulting branch is not accepting. Thus, the fact that there exists a $F_\psi$-accepting run $\rhorun{p}{\psi}{\overline{m}}$ implies that there exists a $d^* \in D$ for which $\logicprop{p_{d^*, e}}{\varepsilon_{d^*, e}}{\overline{m}}$ holds for all $e \in E_{d^*}$. The right-hand side of \eqref{eqnLogicDecompLemBase} holds as a result, and lemma \ref{lemRunToPredicate} holds in the base case.

\begin{flushleft}
\textit{Induction step}: For some natural $t > 0$, if Lemma \ref{lemRunToPredicate} holds for any formula of depth less than  or equal to $t$, the it holds for depth $t$.
\end{flushleft}

Let $\timedepth{\psi} = t$. The sets $N_d$ and $A_d$ may not be empty for all $d \in D$. By the definition of a run of $\rho$, the children of the root $(p, \psi)$ in the accepting run $\rhorun{p}{\psi}{\overline{m}}$ satisfy the right-hand side of equation \eqref{eqnRhoDecompLem}. Hence, there exists a $d^* \in D$ for which every output $\transfunc{p_{d^*, e}}{\varepsilon_{d^*, e}}{m_1}$ and every couple $(p_{d^*, n}, \psi_{d^*, n})$  or $(\varnothing, \accstate_{d, a})$ in \eqref{eqnRhoDecompLem} is a child of the root. From the base case, we can conclude that $\logicprop{p_{d^*, e}}{\varepsilon_{d^*, e}}{\overline{m}}$ is true for any $e \in E_{d^*}$. It remains to show that $\logicprop{p_{d^*, n}}{\psi_{d^*, n}}{\overline{m}^2}$ is true for any $n \in N_{d^*}$ and every $\logicprop{\varnothing}{\accstate_{d, a}}{\overline{m}^2}$ is true for any $a \in A_{d^*}$.

 We consider each case in turn.

A couple $(p_{d^*, n}, \psi_{d^*, n})$ is the root of at least one subrun in $\rhorun{p}{\psi}{\overline{m}}$ starting at message $m_2$. Let's denote it by $\rhorun{p_{d^*, n}}{\psi_{d^*, n}}{\overline{m}^2}$. We know from Lemma \ref{lemRhoDecomp} that $p_{d^*, n}$ assigns every free variable in $\psi_{d^*, n}$. We can also argue that this subrun is $F_{\psi_{d^*, n}}$-accepting since every branch in $\rhorun{p_{d^*, n}}{\psi_{d^*, n}}{\overline{m}^2}$ is an infinite suffix of a branch in $\rhorun{p}{\psi}{\overline{m}}$. Thus, $F_\psi$ is still visited infinitely often, but the formulas are limited to $\subformulaset{\psi_{d^*, n}}$ (Lemma \ref{lemRhoDecomp}). As a result, the intersection $F_\psi \cap (\subformulaset{\psi_{d^*, n}} \cup \lbrace\accstate\rbrace) = F_{\psi_{d^*, n}}$ is visited infinitely often which satisfies the antecedent of  Lemma \ref{lemRunToPredicate}.

The rest of the argument is made easy by the induction hypothesis. Indeed, by Lemma \ref{lemRhoDecomp}, a formula $\psi_{d^*, n}$ fits only one of three  possible descriptions:

\begin{flushleft}
1) $\timedepth{\psi_{d^*, n}} < t$
\end{flushleft}

The induction hypothesis can be applied directly to $\rhorun{p_{d^*, n}}{\psi_{d^*, n}}{\overline{m}^2}$. Therefore, $\logicprop{p_{d^*, n}}{\psi_{d^*, n}}{\overline{m}^2}$ holds.

\begin{flushleft}
2) $\timedepth{\psi_{d^*, n}} = t$ and $\psi_{d^*, n} = \Uarg{\mu}{\eta}$ where $\mu, \eta \in \subformulaset{\psi}$
\end{flushleft}

According to the case $\U$ of Lemma \ref{lemSubrunUR}, $\rhorun{p_{d^*, n}}{\Uarg{\mu}{\eta}}{\overline{m}^2}$ implies that:
\begin{itemize}
\item[$\bullet$] $\exists \ j \geq 2$ for which $\rho$ admits an $F_\eta$-accepting run   with root $(p_{d^*, n}, \eta)$;
\item[$\bullet$] $\forall \ 2 \leq i < j$, $\rho$ admits an $F_\mu$-accepting run  with root $(p_{d^*, n}, \mu)$.
\end{itemize}

Since $\timedepth{\mu}$ and $\timedepth{\eta}$ are less than $\timedepth{\Uarg{\mu}{\eta}} = t$, we can apply the induction hypothesis to the runs listed above:
\begin{itemize}
\item $\exists \ j \geq 2$ for which $\logicprop{p_{d^*, n}}{\eta}{\overline{m}^j}$ holds;
\item $\forall \ 2 \leq i < j$, $\logicprop{p_{d^*, n}}{\mu}{\overline{m}^i}$ holds.
\end{itemize}

By definition of the $\U$ operator, $\logicprop{p_{d^*, n}}{\Uarg{\mu}{\eta}}{\overline{m}^2}$ holds.

\begin{flushleft}
3) $\timedepth{\psi_{d^*, n}} = t$ and $\psi_{d^*, n} = \Rarg{\mu}{\eta}$ where $\mu, \eta \in \subformulaset{\psi}$
\end{flushleft}

The proof is analogous to the one for 2). Upon using Lemma \ref{lemSubrunUR} and the induction hypothesis, we get the following possibilities:
\begin{itemize}
\item[$\bullet$] $\forall \ i \geq 2$, $\logicprop{p_{d^*, n}}{\eta}{\overline{m}^i}$holds;
\end{itemize}
or
\begin{itemize}
\item[$\bullet$] $\exists \ j \geq 2$ for which $\logicprop{p_{d^*, n}}{\mu \land \eta}{\overline{m}^j}$ holds;
\item[$\bullet$] $\forall \ 2 \leq i < j$, $\logicprop{p_{d^*, n}}{\eta}{\overline{m}^i}$ holds.
\end{itemize}

By the definition of the $\R$ operator, $\logicprop{p_{d^*, n}}{\Rarg{\mu}{\eta}}{\overline{m}^2}$ holds in both cases.

We now turn to the set $A_{d^*}$. Let $\logicprop{\varnothing}{\accstate_{d^*, a}}{\overline{m}}$ be state in $A_d^*$. Since the run $\rhorun{p}{\psi}{\overline{m}}$ which generates this state is accepting, $\logicprop{\varnothing}{\accstate_{d^*, a}}{\overline{m}}$ is necessarily of the form $\logicprop{\varnothing}{\accstate{}}{\overline{m}}$.  By the definition of the transition function, this state is reached  after the application of the transition function to a formula of the form $\psi=  \forall _\pi: \psi'$ for which the $Dom_m(\pi)$ is empty, for some message $\overline{m}'$ and valuation function $p$. Since  $\timedepth{\psi} < t$, we can conclude form the induction hypothesis that $\logicprop{p}{\psi}{\overline{m}'}$ holds.

In conclusion, since every term $\logicprop{p_{d, e}}{\varepsilon_{d, e}}{\overline{m}}$ and $\logicprop{p_{d, n}}{\psi_{d, n}}{\overline{m}^2}$ in \eqref{eqnLogicDecompLem} holds for some $d \in D$, it follows that $\logicprop{p}{\psi}{\overline{m}}$ holds.
\showboxcond{}
\end{proof}

\begin{prs} \label{prpRunToPredicate}
For any \LTLE{} formula $\varphi$ devoid of free variables, and any message trace $\overline{m}$, if the automaton $A_\varphi$ accepts $\overline{m}$, then $\overline{m}$ satisfies $\varphi$.
\end{prs}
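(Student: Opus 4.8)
The plan is to derive this proposition directly from Lemma~\ref{lemRunToPredicate}, of which it is essentially the closed-formula specialization. First I would unfold the definition of automaton acceptance. Saying that $A_\varphi$ accepts $\overline{m}$ means, by definition, that $A_\varphi$ admits at least one accepting run on $\overline{m}$; and an accepting run of $A_\varphi$ is precisely an $F_\varphi$-accepting run of $\rho$ on $\overline{m}$ whose root is the couple $(\varnothing, \varphi)$. Thus the hypothesis hands us exactly an $F_\varphi$-accepting run of $\rho$ rooted at $(\varnothing, \varphi)$.

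Next I would verify that the antecedent of Lemma~\ref{lemRunToPredicate} is met with the instantiation $\psi = \varphi$ and $p = \varnothing$. That lemma requires a partial function assigning every free variable of the formula under consideration. Since $\varphi$ is assumed devoid of free variables, this requirement is satisfied vacuously by the empty partial function $\varnothing$. The remaining hypothesis, namely the existence of an $F_\varphi$-accepting run of $\rho$ on $\overline{m}$ with root $(\varnothing, \varphi)$, is exactly what the previous paragraph produced.

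Finally, applying Lemma~\ref{lemRunToPredicate} yields that $\logicprop{\varnothing}{\varphi}{\overline{m}}$ is true, and since $p = \varnothing$ is clear from context this is precisely the statement $\overline{m} \models \varphi$, which closes the argument.

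Because the entire inductive burden has already been discharged inside Lemma~\ref{lemRunToPredicate}, I expect no genuine obstacle here; the only point demanding any care is the bookkeeping observation that $\varnothing$ legitimately fulfils the lemma's variable-assignment precondition when $\varphi$ is closed, so that the specialization is sound rather than a vacuous misapplication.
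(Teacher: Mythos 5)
Your proposal is correct and follows exactly the paper's own argument: unfold the definition of acceptance to obtain an $F_\varphi$-accepting run of $\rho$ rooted at $(\varnothing, \varphi)$, then apply Lemma~\ref{lemRunToPredicate} with $p = \varnothing$ to conclude $\logicprop{\varnothing}{\varphi}{\overline{m}}$. Your explicit check that $\varnothing$ vacuously satisfies the lemma's variable-assignment precondition for a closed formula is a small but welcome addition the paper leaves implicit.
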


\begin{proof}
An accepting run of $A_\varphi$ on $\overline{m}$ is, by definition, an $F_\varphi$-accepting run of $\rho$ with root $(\varnothing, \varphi)$. By Lemma \ref{lemRunToPredicate}, $\rhorun{\varnothing}{\psi}{\overline{m}}$ is accepting implies that  $\logicprop{\varnothing}{\varphi}{\overline{m}}$ is true, which in turns means that ``$\overline{m}$ satisfies $\varphi$ given the free variable assignments in $\varnothing$''.
\showboxcond{}\end{proof}

\subsection{$\overline{m} \models \varphi \Rightarrow A_\varphi$ accepts $\overline{m}$}
The following lemma states the connection between the semantics of the the temporal operator $\R$ and $\U$ and the tree decomposition of these formul\ae{}. It is analogous to Lemma \ref{lemSubrunUR}. 

\begin{lem} \label{lemSuperrunUR}
Let $\mu$ and $\eta$ be \LTLE{} formulas, and $p : V \rightarrow D$ be a partial function that assigns every free variable in $\mu$ and $\eta$. Let $\overline{m} = m_1, m_2, \ldots$ be a message trace. If:
\begin{itemize}
\item[$\bullet$] $\forall \ i \geq 1$, $\rho$ admits an $F_\eta$-accepting run $\rhorun{p}{\eta}{\overline{m}^i}$  with root $(p, \eta)$,
\end{itemize}
then $\rho$ admits an $F_{(\Rarg{\mu}{\eta})}$-accepting run $\rhorun{p}{\Rarg{\mu}{\eta}}{\overline{m}}$ with root $(p, \Rarg{\mu}{\eta})$. Furthermore, if:
\begin{itemize}
\item[$\bullet$] $\exists \ j \geq 1$ for which $\rho$ admits an $F_{(\mu \land \eta)}$-accepting run $\rhorun{p}{\mu \land \eta}{\overline{m}^j}$  with root $(p, \mu \land \eta)$ (resp. $F_\eta$-accepting run $\rhorun{p}{\eta}{\overline{m}^j}$  with root $(p, \eta)$);
\item[$\bullet$] $\forall \ 1 \leq i < j$, $\rho$ admits an $F_\eta$-accepting run $\rhorun{p}{\eta}{\overline{m}^i}$  with root $(p, \eta)$ (resp. $F_\mu$-accepting run $\rhorun{p}{\mu}{\overline{m}^i}$  with root $(p, \mu)$),
\end{itemize}
then $\rho$ admits an $F_{(\Rarg{\mu}{\eta})}$-accepting run $\rhorun{p}{\Rarg{\mu}{\eta}}{\overline{m}}$   with root $(p, \Rarg{\mu}{\eta})$ (resp. $F_{(\Uarg{\mu}{\eta})}$-accepting run $\rhorun{p}{\Uarg{\mu}{\eta}}{\overline{m}}$  with root $(p, \Uarg{\mu}{\eta})$).
\end{lem}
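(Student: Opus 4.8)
The plan is to construct the required accepting runs explicitly, assembling them from the subruns supplied in the hypotheses and grafting these onto a ``spine'' of repeated states $(p, \Uarg{\mu}{\eta})$ (resp.\ $(p, \Rarg{\mu}{\eta})$). The construction is driven entirely by the two transition rules $\transfunc{p}{\Uarg{\mu}{\eta}}{m} = \transfunc{p}{\eta}{m} \lor \big(\transfunc{p}{\mu}{m} \land (p, \Uarg{\mu}{\eta})\big)$ and $\transfunc{p}{\Rarg{\mu}{\eta}}{m} = \transfunc{p}{\mu \land \eta}{m} \lor \big(\transfunc{p}{\eta}{m} \land (p, \Rarg{\mu}{\eta})\big)$, together with the accepting-set identities $F_{(\Uarg{\mu}{\eta})} = F_\mu \cup F_\eta$ and $F_{(\Rarg{\mu}{\eta})} = F_\mu \cup F_\eta \cup \{\Rarg{\mu}{\eta}\}$. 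The single feature that governs the whole acceptance argument is that $\Uarg{\mu}{\eta} \notin F_{(\Uarg{\mu}{\eta})}$ whereas $\Rarg{\mu}{\eta} \in F_{(\Rarg{\mu}{\eta})}$.

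First I would treat the Until case, which appears only as the ``resp.'' alternative of the furthermore clause. Given the witness $j$ and the subruns $\rhorun{p}{\mu}{\overline{m}^i}$ for $1 \leq i < j$ together with $\rhorun{p}{\eta}{\overline{m}^j}$, I build a tree whose spine is a finite path of nodes labelled $(p, \Uarg{\mu}{\eta})$ reading $m_1, \ldots, m_j$. At the spine node reading $m_i$ with $i < j$ I select the right disjunct of the transition rule: its children are the root-children of $\rhorun{p}{\mu}{\overline{m}^i}$ (together with the subtrees hanging below them) plus one fresh continuation node $(p, \Uarg{\mu}{\eta})$ reading $m_{i+1}$. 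At the terminal spine node reading $m_j$ I instead select the left disjunct, grafting the root-children and subtrees of $\rhorun{p}{\eta}{\overline{m}^j}$ with no continuation node. The depth-to-message correspondence is automatic because each supplied subrun $\rhorun{p}{\cdot}{\overline{m}^i}$ has its root reading $m_i$, which is exactly the message read by the spine node onto which it is grafted.

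The Release case is handled by the same mechanism with two variants. In the ``globally $\eta$'' subcase I take the right disjunct at every step, producing an \emph{infinite} spine of nodes $(p, \Rarg{\mu}{\eta})$ with a subtree $\rhorun{p}{\eta}{\overline{m}^i}$ grafted at each depth; in the witnessed subcase I build a finite spine of length $j$, grafting $\rhorun{p}{\eta}{\overline{m}^i}$ via the right disjunct for $i < j$ and terminating at $m_j$ with the left disjunct, which grafts $\rhorun{p}{\mu \land \eta}{\overline{m}^j}$.

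Two verifications remain, and the second is where the real work lies. To see that each construction is a genuine run in the sense of Definition \ref{dfnRunRho}, I note that the grafted root-children of any supplied subrun already satisfy the relevant transition formula ($\transfunc{p}{\mu}{m_i}$, $\transfunc{p}{\eta}{m_i}$, or $\transfunc{p}{\mu \land \eta}{m_j}$) by virtue of being a run's children; adjoining the continuation couple then satisfies the chosen conjunction, hence the full disjunction, so condition~2 holds at every spine node and is inherited unchanged inside the grafted subtrees. The delicate point is acceptance. Every infinite branch of the constructed tree is a finite spine prefix followed by an infinite tail lying entirely within one grafted subrun; since that subrun is $F_\mu$-, $F_\eta$-, or $F_{(\mu \land \eta)}$-accepting and each of these sets is contained in the target accepting set, the tail visits the target set infinitely often. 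The only branch that threatens the argument is an all-spine branch: for Until this cannot arise because the witness $j$ makes the spine finite (and indeed an infinite all-spine branch would never visit $F_{(\Uarg{\mu}{\eta})}$, exactly as in Lemma \ref{lemSubrunUR}), whereas for the global Release subcase the infinite spine is harmless precisely because $\Rarg{\mu}{\eta} \in F_{(\Rarg{\mu}{\eta})}$, so the spine itself visits the accepting set at every node. I expect this asymmetry between the two operators to be the main obstacle to state cleanly, everything else being routine bookkeeping.
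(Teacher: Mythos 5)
Your construction is correct and is essentially the paper's own argument: both proofs graft the hypothesized subruns onto a spine of repeated $(p, \Rarg{\mu}{\eta})$ (resp. $(p, \Uarg{\mu}{\eta})$) nodes via the right disjunct of the transition rule, terminate the spine with the left disjunct when a witness $j$ exists, and settle acceptance by the inclusions $F_\eta, F_{(\mu\land\eta)} \subseteq F_{(\Rarg{\mu}{\eta})}$ together with the observation that the lone all-spine branch is harmless only in the Release case because $\Rarg{\mu}{\eta} \in F_{(\Rarg{\mu}{\eta})}$. The paper phrases the grafting through the disjunctive decomposition of Lemma~\ref{lemRhoDecomp} (choosing indices $d_\eta^i$), but this is the same construction you describe.
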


\begin{proof}

\begin{flushleft}
1) Case `` $\forall \ i \geq 1$, $\rho$ admits an $F_\eta$-accepting run $\rhorun{p}{\eta}{\overline{m}^i}$ on $\overline{m}^i$ and with root $(p, \eta)$''
\end{flushleft}

Let's  suppose that $\rho$ admits an $F_\eta$-accepting run $\rhorun{p}{\eta}{\overline{m}^i}$ for all $i \geq 1$. We must build an $F_{(\Rarg{\mu}{\eta})}$-accepting run $\rhorun{p}{\Rarg{\mu}{\eta}}{\overline{m}}$ that includes each one of these runs.

The beginning of any run $\rhorun{p}{\eta}{\overline{m}^i}$, and of the run $\rhorun{p}{\Rarg{\mu}{\eta}}{\overline{m}}$, is dictated by the rule
\begin{equation*}
\transfunc{p}{\Rarg{\mu}{\eta}}{m} =
\transfunc{p}{\mu \land \eta}{m} \lor
\Big(
\transfunc{p}{\eta}{m} \land
(p, \Rarg{\mu}{\eta})
\Big)
\text{,}
\end{equation*}
which is also applied each time the children of a node $(p, \Rarg{\mu}{\eta})$ are considered. Since $p$ assigns every free variable in $\mu$ and $\eta$, we can use Lemma \ref{lemRhoDecomp} as we did in the proof of Lemma \ref{lemSubrunUR} to obtain the rule
\begin{equation} \label{eqnReleaseDecompSuper}
\transfunc{p}{\Rarg{\mu}{\eta}}{m} =
\bigvee_{d \in D_{(\mu \land \eta)}}
\Big(\bigwedge_{c \in C_d} R_{d, c}\Big)
\lor
\bigvee_{d \in D_\eta}
\Big(\bigwedge_{c \in C_d} R_{d, c} \land (p, \Rarg{\mu}{\eta})\Big)
\text{.}
\end{equation}
The sets $D_{(\mu \land \eta)}$ and $D_\eta$ encompasses the decomposition of $\transfunc{p}{\mu \land \eta}{m}$ and $\transfunc{p}{\eta}{m}$ respectively.

By Definition \ref{dfnRunRho}, in a run $\rhorun{p}{\eta}{\overline{m}^i}$, there exits at least one $d_{\eta}^{\!\; i} \in D_\eta$ for which any couple $\smash{R_{d_{\eta}^{i}, c}}$ in \eqref{eqnReleaseDecompSuper} is a child of the root (if $\smash{c \in E_{d_{\eta}^{i}}}$, the ``couple $\smash{R_{d_{\eta}^{i}, c}}$'' is the output of $\smash{R_{d_{\eta}^{i}, c}}$). Let's denote the set of all couples $\smash{R_{d_{\eta}^{i}, c}}$ by $\smash{R_{d_{\eta}^i}}$. Similarly, for each node $(p, \Rarg{\mu}{\eta})$ in the run $\rhorun{p}{\Rarg{\mu}{\eta}}{\overline{m}}$, there must be at least one $d^* \in D_{(\mu \land \eta)} \cup D_\eta$ for which any couple in \eqref{eqnReleaseDecompSuper} indexed by $d^*$ is a child of $(p, \Rarg{\mu}{\eta})$.

Let's consider the case where $i = 1$. If we set $\smash{d^* = d_{\eta}^1}$ for the root of the node $\rhorun{p}{\Rarg{\mu}{\eta}}{\overline{m}}$, then \eqref{eqnReleaseDecompSuper} implies that the children must at least include the set $\smash{R_{d_{\eta}^1}}$ and a node $(p, \Rarg{\mu}{\eta})$. We choose not to include others nodes, so it remains to define the subruns generated by $\smash{R_{d_{\eta}^1}}$ and $(p, \Rarg{\mu}{\eta})$. The subruns generated by the former are easy because they can be copied from $\rhorun{p}{\eta}{\overline{m}^1}$. Since this run is $F_\eta$-accepting, we know that every branch in a copied subrun visits $F_\eta$ infinitely often. With the inclusion $F_\eta \subset F_{(\Rarg{\mu}{\eta})}$, we can also state that every branch visits $F_{(\Rarg{\mu}{\eta})}$ infinitely often. As for the child node $(p, \Rarg{\mu}{\eta})$, its subrun is obtained by repeating the previous procedure for $i = 2$ and beyond. We use $d_{\eta}^{\!\; i}$ for every $i \geq 2$, and we copy subruns in $\rhorun{p}{\eta}{\overline{m}^i}$ for $\smash{R_{d_{\eta}^i}}$.

The resulting run of $\rho$ on $\overline{m}$ and with root $(p, \Rarg{\mu}{\eta})$ admits two types of branches. All but one eventually reach a node $\smash{R_{d_{\eta}^{i}, c}}$ for some $i \geq 1$, and thus visit $F_{(\Rarg{\mu}{\eta})}$ infinitely often. The remaining branch never reaches a node $\smash{R_{d_{\eta}^{i}, c}}$, and thus only visits the node $(p, \Rarg{\mu}{\eta})$. Fortunately, the formula $\Rarg{\mu}{\eta}$ identifies an accepting state, so this branch also visits $F_{(\Rarg{\mu}{\eta})}$ infinitely often. These last observations make the resulting run $F_{(\Rarg{\mu}{\eta})}$-accepting.

\begin{flushleft}
2) Case : `` $\exists \ j \geq 1$ for which $\rho$ admits an $F_{(\mu \land \eta)}$-accepting run $\rhorun{p}{\mu \land \eta}{\overline{m}^j}$ on $\overline{m}^j$ and with root $(p, \mu \land \eta)$ (resp. $F_\eta$-accepting run $\rhorun{p}{\eta}{\overline{m}^j}$ on $\overline{m}^j$ and with root $(p, \eta)$);\\
 $\forall \ 1 \leq i < j$, $\rho$ admits an $F_\eta$-accepting run $\rhorun{p}{\eta}{\overline{m}^i}$ on $\overline{m}^i$ and with root $(p, \eta)$ (resp. $F_\mu$-accepting run $\rhorun{p}{\mu}{\overline{m}^i}$ on $\overline{m}^i$ and with root $(p, \mu)$)''.
\end{flushleft}

Let's now suppose that $\rho$ admits an $F_{(\mu \land \eta)}$-accepting run $\rhorun{p}{\mu \land \eta}{\overline{m}^j}$ for some $j \geq 1$. Let's also suppose that for any $1 \leq i < j$, $\rho$ admits an $F_\eta$-accepting run $\rhorun{p}{\eta}{\overline{m}^i}$. As with the first ``case'', we must build an $F_{(\Rarg{\mu}{\eta})}$-accepting run $\rhorun{p}{\Rarg{\mu}{\eta}}{\overline{m}}$ that includes all these runs.

Until message $m_j$ in $\overline{m}$ is reached, we can use the approach described in the first ``case'' to build a partial run from the root $(p, \Rarg{\mu}{\eta})$. Thus, for all $1 \leq i < j$, a node $(p, \Rarg{\mu}{\eta})$ in our partial run, upon reading message $m_i$, is followed by subruns in $\rhorun{p} {\eta}{\overline{m}^i}$ and by a node $(p, \Rarg{\mu}{\eta})$. It remains to define a subrun on the trace $\overline{m}^j$ generated by a node $(p, \Rarg{\mu}{\eta})$.

By Definition \ref{dfnRunRho}, in the run $\rhorun{p}{\mu \land \eta}{\overline{m}^j}$, there is an index $d^{\!\; j} \in D_{(\mu \land \eta)}$ for which a couple $R_{d, c}$ in \eqref{eqnReleaseDecompSuper} is a child of the root $(p, \mu \land \eta)$ if $d = d^{\!\; j}$. Let's denote the set of all couples $\smash{R_{d^j, c}}$ by $\smash{R_{d^j}}$. Not only does $\smash{R_{d^j}}$ satisfies the right-hand side of \eqref{eqnReleaseDecompSuper}, the subrun of any couple $\smash{R_{d^j, c}}$ in $\rhorun{p}{\mu \land \eta}{\overline{m}^j}$ visits $F_{(\mu \land \eta)}$ infinitely often. Since $F_{(\mu \land \eta)} \subset F_{(\Rarg{\mu}{\eta})}$, it also visits $F_{(\Rarg{\mu}{\eta})}$ infinitely often. We hence use the couples in $\smash{R_{d^j}}$ and their subruns to complete our run on $\overline{m}$.

Any branch $\beta$ in the resulting run eventually reaches a branch in a copied subrun on $\overline{m}^k$ for some $1 \leq k \leq j$ specific to $\beta$. As previously argued, these subbranches visit $F_{(\Rarg{\mu}{\eta})}$ infinitely often. This makes our run on $\overline{m}$ $F_{(\Rarg{\mu}{\eta})}$-accepting.

The proof for the subcase $\Uarg{\mu}{\eta}$ is analogous to the one for $\Rarg{\mu}{\eta}$. It is simply a matter of swapping formulas in some of the symbols used. As an example, equation \eqref{eqnReleaseDecompSuper} maintains its overall form, but becomes:
\begin{equation*}
\transfunc{p}{\Uarg{\mu}{\eta}}{m} =
\bigvee_{d \in D_\eta}
\Big(\bigwedge_{c \in C_d} R_{d, c}\Big)
\lor
\bigvee_{d \in D_\mu}
\Big(\bigwedge_{c \in C_d} R_{d, c} \land (p, \Uarg{\mu}{\eta})\Big)
\text{.}
\end{equation*}

The sets $F_\mu$ and $F_\eta$ are obviously included in the set $F_{(\Uarg{\mu}{\eta})}$.
\showboxcond{} \end{proof}


\begin{lem} \label{lemPredicateToRun}
For any \LTLE{} formula $\psi$ and any partial function $p : V \rightarrow \image{}$ that assigns every free variable in $\psi$, if the predicate $\logicprop{p}{\psi}{\overline{m}}$ is true for a trace $\overline{m}$, then $\rho$ admits an $F_\psi$-accepting run $\rhorun{p}{\psi}{\overline{m}}$  with root $(p, \psi)$.
\end{lem}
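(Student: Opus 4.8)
The plan is to prove this converse of Lemma~\ref{lemRunToPredicate} by strong induction on $\timedepth{\psi}$, reusing the matched decompositions of Lemmas~\ref{lemRhoDecomp}, \ref{lemLogicDecomp}, and \ref{lemDecompEquiv}. By those three lemmas, $\logicprop{p}{\psi}{\overline{m}}$ and $\transfunc{p}{\psi}{m_1}$ admit disjunctive normal forms whose terms are paired, so selecting a disjunct in one selects the corresponding disjunct in the other. Since $\logicprop{p}{\psi}{\overline{m}}$ is true, some disjunct $d^*$ has all of its conjuncts true: the equalities $\logicprop{p_{d^*, e}}{\varepsilon_{d^*, e}}{\overline{m}}$, the shifted predicates $\logicprop{p_{d^*, n}}{\psi_{d^*, n}}{\overline{m}^2}$, and the constants $\logicprop{\varnothing}{\accstate_{d^*, a}}{\overline{m}^2}$, the last of which forces every $\accstate_{d^*, a}$ to equal $\accstate$ rather than $\rejstate$. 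The paired conjunction in the decomposition of $\transfunc{p}{\psi}{m_1}$ therefore consists of outputs $(\varnothing, \accstate)$ arising from the true equalities, couples $(p_{d^*, n}, \psi_{d^*, n})$, and pit couples $(\varnothing, \accstate)$. I take the children of the root $(p, \psi)$ to be exactly this collection, so they satisfy $\transfuncDeux{(p, \psi)}{m_1}$ by construction, and it remains to attach an accepting subtree below each child.

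\emph{Base case.} When $\timedepth{\psi} = 0$, Lemmas~\ref{lemRhoDecomp} and \ref{lemLogicDecomp} guarantee that $N_d$ and $A_d$ are empty, so the selected conjunction contains only equality outputs. Each true equality yields a child $(\varnothing, \accstate)$ whose sole successor is itself; since $\accstate \in F_\psi$, that branch visits $F_\psi$ infinitely often, and the resulting one-level tree with self-loops is an $F_\psi$-accepting run.

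\emph{Induction step.} For $\timedepth{\psi} = t$, the equality children and the pit children $(\varnothing, \accstate)$ again contribute accepting self-loops. For each couple $(p_{d^*, n}, \psi_{d^*, n})$ I must build an $F_{\psi_{d^*, n}}$-accepting subrun on $\overline{m}^2$, using that $\logicprop{p_{d^*, n}}{\psi_{d^*, n}}{\overline{m}^2}$ holds and that $p_{d^*, n}$ assigns every free variable of $\psi_{d^*, n}$ (Lemma~\ref{lemRhoDecomp}). I split on the three descriptions of $\psi_{d^*, n}$: if $\timedepth{\psi_{d^*, n}} < t$, the induction hypothesis applies directly; if $\psi_{d^*, n} = \Uarg{\mu}{\eta}$ (resp.\ $\Rarg{\mu}{\eta}$) of depth $t$, then $\timedepth{\mu}, \timedepth{\eta} < t$, so unrolling the semantics of $\U$ (resp.\ $\R$) produces, for suitable indices, the truth of $\logicprop{p_{d^*, n}}{\eta}{\overline{m}^j}$ and $\logicprop{p_{d^*, n}}{\mu}{\overline{m}^i}$ (resp.\ the corresponding Release pattern), the induction hypothesis converts these into $F_\mu$- and $F_\eta$- (resp.\ $F_{(\mu \land \eta)}$- and $F_\eta$-)accepting runs, and Lemma~\ref{lemSuperrunUR} assembles them into the required $F_{(\Uarg{\mu}{\eta})}$- (resp.\ $F_{(\Rarg{\mu}{\eta})}$-)accepting subrun on $\overline{m}^2$.

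It then remains to glue these subruns under the root and confirm global acceptance, which I expect to be the main thing to verify carefully. The key fact is the inclusion $F_{\psi_{d^*, n}} \subseteq F_\psi$: since each $\psi_{d^*, n}$ is a subformula of $\psi$, this follows by a short induction on the recursive definition of $F$. Consequently every branch passing through a subrun visits $F_{\psi_{d^*, n}} \subseteq F_\psi$ infinitely often, while every branch trapped in an $\accstate$ pit visits $\accstate \in F_\psi$ infinitely often; hence every branch of the assembled tree visits $F_\psi$ infinitely often. As the tree respects $\rho$ at the root and along each subrun, it is the desired $F_\psi$-accepting run $\rhorun{p}{\psi}{\overline{m}}$. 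The delicate bookkeeping is to confirm that $p_{d^*, n}$ still assigns all free variables after the quantifier expansions and that the single non-terminating branch of a Release node, which remains at $(p, \Rarg{\mu}{\eta})$ forever, is accepted precisely because that state lies in $F_\psi$.
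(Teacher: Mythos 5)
Your proposal is correct and follows essentially the same route as the paper's own proof: strong induction on temporal depth, selection of a true disjunct $d^*$ in the matched decompositions of Lemmas~\ref{lemRhoDecomp}, \ref{lemLogicDecomp}, and \ref{lemDecompEquiv}, the three-way case split on $\psi_{d^*, n}$ with Lemma~\ref{lemSuperrunUR} handling the depth-$t$ Until and Release cases, and the inclusion $F_{\psi_{d^*, n}} \subseteq F_\psi$ to conclude global acceptance. Your explicit remark that the truth of $\logicprop{\varnothing}{\accstate_{d^*, a}}{\overline{m}^2}$ forces each $\accstate_{d^*, a}$ to be $\accstate$ is a small clarity improvement over the paper, but the argument is the same.
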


\begin{proof}
As with Lemma \ref{lemRunToPredicate}, we proceed by strong induction on the temporal depth of \LTLE{} formulas. Our approach is based on the decompositions of $\transfunc{p}{\psi}{m_1}$ and $\logicprop{p}{\psi}{\overline{m}}$, which are given by Lemmas \ref{lemRhoDecomp} and \ref{lemLogicDecomp} respectively:
\begin{equation} \tag{\ref{eqnRhoDecompLem}}
\transfunc{p}{\psi}{m_1} =
\bigvee_{d \in D}
\Big(
\bigwedge_{e \in E_d} \transfunc{p_{d, e}}{\varepsilon_{d, e}}{m_1}
\land
\bigwedge_{n \in N_d} (p_{d, n}, \psi_{d, n})
\land
\bigwedge_{a \in A_d} (\varnothing, \accstate_{d, a})
\Big)
\text{;}
\end{equation}
$\logicprop{p}{\psi}{\overline{m}} \Leftrightarrow$\\
\begin{equation} \tag{\ref{eqnLogicDecompLem}}
\bigvee_{d \in D}
\Big(
\bigwedge_{e \in E_d} \logicprop{p_{d, e}}{\varepsilon_{d, e}}{\overline{m}}
\land
\bigwedge_{n \in N_d} \logicprop{p_{d, n}}{\psi_{d, n}}{\overline{m}^2}
\land
\bigwedge_{a \in A_d} \logicprop{\varnothing}{\accstate_{d, a}}{\overline{m}^2}
\Big)
\text{.}
\end{equation}

For any $d \in D$, $e \in E_d$,  $n \in N_d$ and $a \in A_d$, the terms $p_{d, e}$, $p_{d, n}$, $\varepsilon_{d, e}$,  $\psi_{d, n}$ and $\accstate_{d, a}$ in \eqref{eqnRhoDecompLem} can be assumed to be identical to their counterpart in \eqref{eqnLogicDecompLem} due to Lemma \ref{lemDecompEquiv}.

\begin{flushleft}
\textit{Base case}: Lemma \ref{lemPredicateToRun} holds for any formula of depth 0.
\end{flushleft}

Suppose $\timedepth{\psi} = 0$. We know from Lemmas \ref{lemRhoDecomp} and \ref{lemLogicDecomp} that the sets $N_d$ and $A_d$ in this special case, are empty for every $d \in D$. Thus, we can rewrite equations \eqref{eqnRhoDecompLem} and \eqref{eqnLogicDecompLem} as follows:
\begin{equation} \tag{\ref{eqnRhoDecompLemBase}}
\transfunc{p}{\psi}{m_1} =
\bigvee_{d \in D}
\Big(
\bigwedge_{e \in E_d} \transfunc{p_{d, e}}{\varepsilon_{d, e}}{m_1}
\Big)
\text{;}
\end{equation}
\begin{equation} \tag{\ref{eqnLogicDecompLemBase}}
\logicprop{p}{\psi}{\overline{m}} \leftrightarrow
\bigvee_{d \in D}
\Big(
\bigwedge_{e \in E_d} \logicprop{p_{d, e}}{\varepsilon_{d, e}}{\overline{m}}
\Big)
\text{.}
\end{equation}

If the predicate $\logicprop{p}{\psi}{\overline{m}}$ holds, then there exists a $d^* \in D$ for which $\logicprop{p_{d^*, e}}{\varepsilon_{d^*, e}}{\overline{m}}$ is true for every $e \in E_{d^*}$. It follows from the definition of $\rho$ that the output of a term $\transfunc{p_{d, e}}{\varepsilon_{d, e}}{m_1}$ in \eqref{eqnRhoDecompLemBase} is $(\varnothing, \accstate)$ for  $d = d^*$. Hence, this  couple satisfies the right-hand side of \eqref{eqnRhoDecompLemBase} and can thus be the only child of the root $(p, \psi)$ in our run. Since the only successor of a node $(\varnothing, \accstate)$ is itself (for any input message), the resulting run has a single  branch that visits the node $(\varnothing, \accstate)$ infinitely often. Since the state $\accstate$ is in $F_\psi$, this run on $\overline{m}$ is $F_\psi$-accepting.

\begin{flushleft}
\textit{Induction step}: For some natural $t > 0$, if Lemma \ref{lemPredicateToRun} holds for any formula of depth less than $t$, then it also holds for any formula of depth $t$.
\end{flushleft}

Suppose $\timedepth{\psi} = t$. The sets $N_d$ and $A_d$ may not be empty for all $d \in D$, so equations \eqref{eqnRhoDecompLem} and \eqref{eqnLogicDecompLem} must be used. If $\logicprop{p}{\psi}{\overline{m}}$ holds, then there exists a $d^* \in D$ for which the predicates $\logicprop{p_{d^*, e}}{\varepsilon_{d^*, e}}{\overline{m}}$, $\logicprop{p_{d^*, n}}{\psi_{d^*, n}}{\overline{m}^2}$  and  $\logicprop{\varnothing}{\accstate_{d, a}}{\overline{m}^2}$ in \eqref{eqnLogicDecompLem} hold for any $e \in E_{d^*}$,  $n \in N_{d^*}$ and $a\in A_{d^*}$.

 The accepting run for $(p, \psi)$ will include every couple indexed by $d^*$, in $E_{d^*}$, $N_{d^*}$ and $A_{d^*}$ .  The base case already shows that any term $\transfunc{p_{d, e}}{\varepsilon_{d, e}}{m_1}$ indexed by $d^*$ in \eqref{eqnRhoDecompLem} outputs the accepting state $(\varnothing, \accstate)$. If $N_{d^*}$ and $A_{d^*}$ are not empty, we must include their content in the children of the root $(p, \psi)$ in the accepting run.

We also include any couple $(p_{d, n}, \psi_{d, n})$ indexed by $d^*$ in \eqref{eqnRhoDecompLem} in order to satisfy the right-hand side of this equation. We know that the branch generated by the child node $(\varnothing, \accstate)$ visits $F_\psi$ infinitely often. It remains to define an $F_\psi$-accepting subrun on the trace $\overline{m}^2$ for every child node $(p_{d^*, n}, \psi_{d^*, n})$. It remains to show that the elements of these sets also generate an accepting run.

We consider first the elements of $N_{d^*}$.

Since every formula $\psi_{d^*, n}$ is a subformula of $\psi$ (Lemma \ref{lemRhoDecomp}),  the inclusion $\smash{F_{\psi_{d^*, n}}} \subseteq F_\psi$ holds. Therefore, for any child node $(p_{d^*, n}, \psi_{d^*, n})$, it suffices to show that its subrun is $\smash{F_{\psi_{d^*, n}}}$-accepting. Lemma \ref{lemRhoDecomp} also tells us that $p_{d^*, n}$ assigns every free variable in $\psi_{d^*, n}$ for any $n \in N_{d^*}$. As such, every predicate $\logicprop{p_{d^*, n}}{\psi_{d^*, n}}{\overline{m}^2}$ in \eqref{eqnLogicDecompLem} satisfies the condition of Lemma \ref{lemPredicateToRun}.

The remainder of the argument is made easy by the induction hypothesis. Indeed, by Lemma \ref{lemRhoDecomp}, a formula $\psi_{d^*, n}$ fits only one of three possible cases:

\begin{flushleft}
1) $\timedepth{\psi_{d^*, n}} < t$
\end{flushleft}

The induction hypothesis directly applies to $\logicprop{p_{d^*, n}}{\psi_{d^*, n}}{\overline{m}^2}$. Thus, $\rho$ admits an $\smash{F_{\psi_{d^*, n}}}$-accepting run on $\overline{m}^2$ and with root $(p_{d^*, n}, \psi_{d^*, n})$ as desired. As was the case for lemma \ref{lemRunToPredicate}, this case also implies the validity of any formula in $A_{d^*}$.

\begin{flushleft}
2) $\timedepth{\psi_{d^*, n}} = t$ and $\psi_{d^*, n} = \Uarg{\mu}{\eta}$ where $\mu, \eta \in \subformulaset{\psi}$
\end{flushleft}

We simply follow the proof of Lemma \ref{lemRunToPredicate} for this case, but in reverse. First, if $\logicprop{p_{d^*, n}}{\Uarg{\mu}{\eta}}{\overline{m}^2}$ holds, then by the definition of $\U$, for some $j \geq 2$, $\logicprop{p_{d^*, n}}{\eta}{\overline{m}^j}$ holds and for every $2 \leq i < j$, $\logicprop{p_{d^*, n}}{\mu}{\overline{m}^i}$ also holds. Next, because the temporal depths of $\mu$ and $\eta$ are less than $t$, the induction hypothesis applies. Hence, $\rho$ admits an $F_\eta$-accepting run on $\overline{m}^j$ and $F_\mu$-accepting runs on $\overline{m}^i$ for every $2 \leq i < j$. Finally, by Lemma \ref{lemSuperrunUR}, $\rho$ admits an $F_{(\Uarg{\mu}{\eta})}$-accepting run on $\overline{m}^2$ as desired.

\begin{flushleft}
3) $\timedepth{\psi_{d^*, n}} = t$ and $\psi_{d^*, n} = \Rarg{\mu}{\eta}$ where $\mu, \eta \in \subformulaset{\psi}$
\end{flushleft}

As with(2), we follow the proof of Lemma \ref{lemRunToPredicate} for the current case in reverse. Note though that if $\logicprop{p_{d^*, n}}{\Rarg{\mu}{\eta}}{\overline{m}^2}$ holds, two possibilities arise:
\begin{itemize}
\item[$\bullet$] $\forall \ i \geq 2$, $\logicprop{p_{d^*, n}}{\eta}{\overline{m}^i}$ is true;
\item[$\bullet$] $\exists \ j \geq 2$ for which $\logicprop{p_{d^*, n}}{\mu \land \eta}{\overline{m}^j}$ is true, and $\logicprop{p_{d^*, n}}{\eta}{\overline{m}^i}$ is true $\forall \ 2 \leq i < j$.
\end{itemize}

Since the depth of $\mu \land \eta$,  is less than $t$, the induction hypothesis applies for both possibilities. 
\showboxcond{}
\end{proof}

\begin{prs} \label{prpPredicateToRun}
For any \LTLE{} formula $\varphi$ devoid of free variables, and let  $\overline{m}$ be a message trace, if $\overline{m}$ satisfies $\varphi$, then the automaton $A_\varphi$ accepts $\overline{m}$.
\end{prs}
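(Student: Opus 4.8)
The plan is to obtain this proposition as an immediate corollary of Lemma~\ref{lemPredicateToRun}, in exact parallel with the way Proposition~\ref{prpRunToPredicate} was derived from Lemma~\ref{lemRunToPredicate}. The key observation is that ``$A_\varphi$ accepts $\overline{m}$'' is, by definition, nothing more than the existence of an $F_\varphi$-accepting run of $\rho$ with root $(\varnothing, \varphi)$, and that ``$\overline{m}$ satisfies $\varphi$'' means precisely that the predicate $\logicprop{\varnothing}{\varphi}{\overline{m}}$ is true. Lemma~\ref{lemPredicateToRun} connects exactly these two statements, so the work reduces to checking that its hypotheses apply and then unfolding the relevant definitions.

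First I would verify that Lemma~\ref{lemPredicateToRun} can legitimately be instantiated with $\psi = \varphi$ and $p = \varnothing$. The lemma requires the partial function to assign every free variable in the formula; since $\varphi$ is devoid of free variables by assumption, this requirement is satisfied vacuously by the empty partial function $\varnothing$. Second, I would restate the hypothesis ``$\overline{m}$ satisfies $\varphi$'' in the form the lemma expects, namely that $\logicprop{\varnothing}{\varphi}{\overline{m}}$ holds, which is exactly the meaning of satisfaction under the (empty) assignment $\varnothing$.

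With both hypotheses in place, Lemma~\ref{lemPredicateToRun} would yield an $F_\varphi$-accepting run $\rhorun{\varnothing}{\varphi}{\overline{m}}$ of $\rho$ with root $(\varnothing, \varphi)$. Unfolding the definition of a run of $A_\varphi$ (a run of $\rho$ rooted at $(\varnothing, \varphi)$) together with the definition of an accepting run (one that is $F_\varphi$-accepting), this run is precisely an accepting run of the automaton $A_\varphi$ on $\overline{m}$. Hence $A_\varphi$ admits at least one accepting run on $\overline{m}$, which is the definition of $A_\varphi$ accepting $\overline{m}$, completing the argument.

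As for where the difficulty really lies, there is essentially no obstacle at the level of this proposition: all of the substantive reasoning has already been carried out inside Lemma~\ref{lemPredicateToRun}, whose inductive proof handles the delicate $\U$ and $\R$ cases through Lemma~\ref{lemSuperrunUR}. The only thing to guard against here is a definitional mismatch---one must ensure that the designated root $(\varnothing, \varphi)$, the acceptance condition $F_\varphi$, and the translation between satisfaction and the predicate $\logicprop{\varnothing}{\varphi}{\overline{m}}$ all line up---so that the general-purpose Lemma~\ref{lemPredicateToRun}, stated for arbitrary roots $(p, \psi)$, specializes cleanly to the automaton's initial configuration.
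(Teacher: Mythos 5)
Your proposal is correct and follows exactly the same route as the paper's own proof: both reduce the statement to Lemma~\ref{lemPredicateToRun} instantiated at $\psi = \varphi$ and $p = \varnothing$ (justified because $\varphi$ has no free variables), and then unfold the definition of an accepting run of $A_\varphi$ as an $F_\varphi$-accepting run of $\rho$ rooted at $(\varnothing, \varphi)$. Nothing is missing.
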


\begin{proof}
All variables in $\varphi$ are bound by quantifiers ($\exists$ or $\forall$), so the statement ``$\overline{m}$ satisfies $\varphi$'' is represented by $\logicprop{\varnothing}{\varphi}{\overline{m}}$. It follows from Lemma \ref{lemPredicateToRun} that there exists an $F_\varphi$-accepting run of $\rho$ on $\overline{m}$ with root $(\varnothing, \varphi)$, which, by definition, is an accepting run of $A_\varphi$ on $\overline{m}$.
\showboxcond{}\end{proof}

\begin{thm}[adapted from Vardi]
Given any \LTLE{} formula $\varphi$, one can build a modified alternating B\"uchi automaton $A_\varphi = (\Sigma, V, \image{}, S, s^0, \rho, F)$, where $\Sigma = M$ and $|S|$ is in $O(|\varphi|)$, such that the language recognized by $A_\varphi$ is exactly the set of message traces satisfying the formula $\varphi$.
\end{thm}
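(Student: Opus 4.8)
The plan is to assemble the theorem from the construction of Section \ref{sec:construction} together with the two directions of correctness already in hand. The automaton $A_\varphi = (\Sigma, V, \image{}, S, s^0, \rho, F)$ and all of its components are precisely those defined in Section \ref{sec:construction}, so nothing new needs to be built; it remains only to verify the three claims of the statement, namely $\Sigma = M$, the size bound $|S| \in O(|\varphi|)$, and the language equality. The identity $\Sigma = M$ requires no argument, since the alphabet of $A_\varphi$ was defined to be the set $M$ of XML messages.

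For the size bound, I would argue that $S = \subformulaset{\varphi} \cup \lbrace \accstate, \rejstate \rbrace$ and that $|\subformulaset{\varphi}|$ is linear in $|\varphi|$. This follows by structural induction on $\varphi$ using the recursive definition of $\subformulaset{\cdot}$: each syntactic construct contributes at most itself together with the subformula sets of its immediate operands, so the number of distinct subformulas is bounded by the number of nodes in the parse tree of $\varphi$. The crucial point, and what distinguishes the first-order case from a naive construction, is that the quantifier rules for $\exists_\pi x : \psi$ and $\forall_\pi x : \psi$ each contribute only the single subformula $\psi$ (alongside themselves), independently of the size of $Dom_m(\pi)$. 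The many instantiations that the semantics of Table \ref{tab:semantics} would otherwise require are absorbed not into new states but into the partial function $p$ carried by the couples $(p, \psi)$ that $\rho$ manipulates. Hence $|S| = |\subformulaset{\varphi}| + 2 \in O(|\varphi|)$.

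For the language equality, I would simply combine the two propositions already proved. Proposition \ref{prpRunToPredicate} supplies one inclusion: if $A_\varphi$ accepts $\overline{m}$ then $\overline{m} \models \varphi$. Proposition \ref{prpPredicateToRun} supplies the converse: if $\overline{m} \models \varphi$ then $A_\varphi$ accepts $\overline{m}$. Together they establish that $A_\varphi$ accepts $\overline{m}$ if and only if $\overline{m} \models \varphi$, so the language recognized by $A_\varphi$ is exactly $\lbrace \overline{m} : \overline{m} \models \varphi \rbrace$, as required.

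I expect the correctness half of the theorem to present no difficulty, since it is essentially a corollary of the two propositions and the definition of acceptance for $A_\varphi$. The only genuinely new reasoning is the size bound, and the main subtlety there is to make explicit that quantifiers do not enlarge the state set, a fact that is already implicit in the construction but deserves to be stated plainly, since it is precisely the property on which the finiteness of $A_\varphi$ regardless of the domain $\image{}$ ultimately rests.
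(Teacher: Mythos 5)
Your proposal is correct and follows the same route as the paper, whose proof of this theorem is simply the one-line observation that it is immediate from the definition of $A_\varphi$ and Propositions \ref{prpRunToPredicate} and \ref{prpPredicateToRun}. Your explicit argument for the size bound $|S| = |\subformulaset{\varphi}| + 2 \in O(|\varphi|)$ --- in particular the remark that quantifier instantiations are absorbed into the partial function $p$ rather than into new states --- is a welcome elaboration of a point the paper leaves implicit, but it does not change the approach.
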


\begin{proof}
Immediate from the definition of $A_\varphi$ and Propositions \ref{prpRunToPredicate} and \ref{prpPredicateToRun}.
\showboxcond{}\end{proof}


\section{Conclusion and Future Works}\label{sect:conclu}
In this paper, we propose a new type of finite alternating automata which recognizes \LTLE{} formul\ae{} an show the process of constructing such an automaton from an \LTLE{} formula.  Our automaton allows for formul\ae{} in the highly expressive logic, \LTLE{} formal logics to be easily stated in a concise and easy to understand formalism. We are currently developing and implementing a verification algorithm that will allow this new tool to be put to practical use.

\bibliographystyle{splncs03}
\bibliography{paper}
\end{document}